\newtheorem{prop}{PROPOSITION}
\newtheorem{corol}{COROLLARY}
\newtheorem{defi}{DEFINITION}
\newtheorem{obs}{OBSERVATION}
\begin{document}
\title{Entanglement properties of multipartite\\ informationally complete quantum measurements}
\author{Jakub Czartowski}
\affiliation{Faculty of Physics, Astronomy and Computer Sciences, Jagiellonian University, Krak\'ow, Poland}
\author{Dardo Goyeneche}
\affiliation{Faculty of Physics, Astronomy and Computer Sciences, Jagiellonian University, Krak\'ow, Poland}
\affiliation{Faculty of Applied Physics and Mathematics, Technical University of Gda\'{n}sk, 80-233 Gda\'{n}sk, Poland}
\affiliation{Departamento de F\'{i}sica, Facultad de Ciencias B\'{a}sicas, Universidad de Antofagasta, Casilla 170, Antofagasta, Chile}
\author{Karol {\.Z}yczkowski}
\affiliation{Institute of Physics, Jagiellonian University, Krak\'ow, Poland}
\affiliation{Center for Theoretical Physics, Polish Academy of Sciences, Warsaw, Poland}

\date{May 28, 2018}

\begin{abstract}
We analyze tight informationally complete measurements for arbitrarily large multipartite systems and study their configurations of entanglement. We demonstrate that tight measurements cannot be exclusively composed neither of fully separable nor maximally entangled states. We establish an upper bound on the maximal number of fully separable states allowed by tight measurements and investigate the distinguished case, in which every measurement operator 
carries the same amount of entanglement. Furthermore, we introduce the notion of nested tight measurements, i.e. multipartite tight informationally complete measurements such that every reduction to a certain number of parties induces a lower dimensional tight measurement, proving that they exist for any number of parties and internal levels.



\end{abstract}
\maketitle
Keywords: Informationally complete quantum measurements, SIC-POVM, mutually unbiased bases, quantum entanglement.

\section{Introduction}
Information stored in physical systems can be retrieved by choosing a suitable set of measurements. In quantum mechanics this information can be obtained by considering \emph{positive operator valued measures} (POVM), also known as \emph{generalized quantum measurements}. By implementing a set of measurements over an ensemble of particles it is possible to tomographically reconstruct quantum states. The information gained from some special sets of measurements is enough to reconstruct any state of a quantum system. Such kind of measurements are called \emph{informationally complete} (IC)-POVM and the process to reconstruct the state is called \emph{quantum state tomography}. These general measurements have a fundamental importance in quantum information theory, as they find applications in quantum processes tomography \cite{S08}, quantum cryptography \cite{R05} and quantum fingerprinting \cite{SWS07}.

Among existing schemes of generalised quantum measurements one distinguishes a class called \emph{tight IC-POVM} \cite{S06, RS07}, having some remarkable properties: \emph{(i)} they provide a simple tomographical reconstruction formula, \emph{(ii)} they minimize the mean square error (MSE), i.e., minimal mean square Hilbert-Schmidt distance between the estimate and the true state under the presence of errors in state preparation and measurements \cite{S06}. Tight IC-POVM, that we also call tight measurement, are closely related to \emph{complex projective t-designs} \cite{N81} for $t\geq 2$. These measurements are optimal for quantum clonning \cite{S06} and have been widely used for theoretical \cite{GA16,Z99,KR05,DCEL09,GAE07,AE07,CLM15,XZ16} and experimental \cite{LNGGNDVS11,ECGNSXL13,B15,DBBV17} applications in quantum information theory. They have recently been applied in a protocol for teleportation of quantum entanglement over 143 km \cite{HSFHWUZ15}.

Along this paper, we study tight measurements for multipartite quantum systems and reveal several aspects concerning amount of entanglement allowed by these configurations. In particular, we shed some light onto the following basic question:
\begin{center}
 \emph{For what multipartite quantum systems one can find an isoentangled tight IC-POVM?}
\end{center}
After the introductory Section \ref{S2}, we start Section \ref{S3} by showing that the answer to the above question is not trivial: a tight measurement cannot be composed by fully separable measurement operators. As a matter of fact, the answer depends on which measure of quantum resources we choose to compare two given quantum measurements. We will optimize quantum resources by considering \emph{(i)} tight measurements composed by the maximal possible number of measurement operators that can be prepared by considering local operations and classical communication (LOCC), and \emph{(ii)} isoentangled tight measurements, in the sense that all measurement operators can be defined through a single fiducial state and local unitary operations.\bigskip

This work is organized as follows: In Section \ref{S2}, we recall the formal definition of tight measurements and study some basic properties. In Section \ref{S3} we study the distribution of entanglement in such quantum measurements and show that they cannot be composed exclusively neither of fully separable nor $k$-uniform states, i.e. pure states such that every reduction to $k$ parties is maximally mixed. We also show that Symmetric IC (SIC)-POVM cannot be composed by grouping fully separable and $k$-uniform states for any $N$ qudit systems and any $k\geq1$, establishing a remarkable difference with respect to the mutually unbiased bases (MUB), where we show that this is possible for every $N$ qudit system and $k=1$ only. In Section \ref{S4} we introduce the notion of nested tight measurements and prove that they exist for any $N$-qudit system. In Section V, we present a summary of our main results and conclude the work. Appendix \ref{Ap1} contains a proof of propositions, whereas Appendix \ref{Ap2} shows a particular configuration of a two-qubit SIC-POVM containing five separable states, which is the maximal number we managed to find.

\section{Tight measurements}\label{S2}
In this section we review the concept of \emph{tight informationally complete-POVM} and show some basic properties and examples. Let us start by recalling the definition of \emph{Positive Operator Valued Measure} (POVM). A POVM is a set of $m$ positive semidefinite operators $\Pi_j$ summing up to the identity. Along the work, we restrict our attention to subnormalized rank-one POVM, i.e., operators of the form
\begin{equation}\label{Pi}
\Pi_j=\frac{D}{m}\,|\varphi_j\rangle\langle\varphi_j|,\hspace{0.3cm}j\in\{1,\dots,m\},
\end{equation}
where $|\varphi_j\rangle\in\mathbb{C}^D$ and $m\geq D$, such that $\sum_{j=1}^m\Pi_j=\mathbb{I}$. Here $m$ denotes the number of outcomes of the POVM. \emph{Weighted frame potentials} \cite{BF03} can be used to quantify how far a given generalized measurement  is from the set of complex projective $t$-designs. The weighted frame potential of order $t$ of a given POVM $\{\Pi_j\}$ with $m$ outcomes reads \cite{S06}:
\begin{equation}\label{potential}
F_t=\sum_{i,j=1}^{m}\omega_i\omega_j\bigl(\mathrm{Tr}(\Pi_i\Pi_j)\bigr)^t,
\end{equation}
where $\{\omega_j\}_{j=1,\dots,m}$ is a real and normalized weighted function, i.e. $0\leq \omega_j\leq1$ and $\sum_{j=1}^{m}\omega_j =1$. For the sake of simplicity, we are going to introduce tight measurements through the notion of weighted frame potentials. The following adopted definition is a simplification of Corollary 14 in Ref.\cite{S06}: 
\begin{defi}
A set of $m$ subnormalized rank-one projectors $\{\Pi_j\}$ acting on a Hilbert space of dimension $D$ is called a tight IC-POVM if for $t=2$ the weighted frame potential $F_t$ given in Eq.(\ref{potential}) saturates the following lower bound
\begin{equation}\label{Welch}
F_t\geq \frac{D^{t}}{m^{t-2}}\binom{D+t-1}{t}^{-1}.
\end{equation}
\end{defi}

Eq.(\ref{Welch}) is also known as \emph{generalised Welch bound} \cite{B08}. A set of vectors saturating Eq.(\ref{Welch}) for a given $t\geq2$ defines a \emph{weighted complex projective $t$-design} \cite{S06}. The factor $(m/D)^t$ appearing in Eq.(\ref{Welch}) is due to the fact that the projectors (\ref{Pi}) are subnormalized. Note that saturation of bound (\ref{Welch}) for $t=1$ occurs for any rank-one POVM, equivalently, complex projective 1-design or tight frame \cite{W03}. For instance, it is saturated by \emph{von Neumann projective measurements}, i.e. orthonormal bases, for which $m=D$.

A remarkable example of tight measurements is formed by 
\emph{Symmetric IC-POVM} \cite{Z99}, for  which $m=D^2$, the weight function reads 
$\omega_i=1/m$ for $i=1,\dots,m$, and all projectors have the same pairwise overlap with respect to
the Hilbert-Schmidt product. These configurations 
form a projective $2$-design 
\cite{RBSC04}, 
they exists in every dimension $d\leq21$ \cite{Z99,RBSC04,Marcus,Markus1, ACFW,GS17} and some other dimensions up to $124$ \cite{GS17}. Furthermore, highly precise numerical solutions 
were obtained for every dimension $d\leq 151$ 
 \cite{Z99,RBSC04,Scott, Andrew, FHS,GS17}
and also  $d=323$ and $d=844$ \cite{GS17}.

Another highly interesting class of tight measurement is represented by maximal sets of mutually unbiased bases (MUB) \cite{I81}, for which $m=D(D+1)$ and the weight function reads 
$\omega_i=1/m$ for  $i=1,\dots,m$. These configurations 
form a projective $2$-design in every dimension in which they exist \cite{KR05}. 
Two orthonormal bases $|\phi_i\rangle$ and $|\psi_j\rangle$ defined in dimension $D$ are \emph{unbiased} if $|\langle\phi_i|\psi_j\rangle|^2=1/D$, for every $i,j=1,\dots,D$. A set of more than two orthonormal bases form \emph{mutually unbiased bases} (MUB) if they are pairwise unbiased. It is known that at most $D+1$ MUBs exist in every dimension $D$, and 
the bound is saturated for any dimension $D$ which can be factored as a power of a prime $d$, $D=d^N$ \cite{WF89}. Some more general classes of tight measurements have been constructed in Ref. \cite{RS07}. \medskip

The existence and construction of tight measurements forms a challenging open problem. Indeed, the maximal number of MUBs in dimension $D=6$ is still unknown, while SIC-POVMs are currently 
known to exist in a finite set of dimensions only. A remarkable property of tight measurements $\{\Pi_j\}$ is the fact that any $D$ dimensional quantum state $\rho$ can be decomposed \cite{S06} 
as follows:
\begin{equation}\label{rho}
    \rho = \frac{m(D + 1)}{D}  \sum_{j=1}^m p_j \Pi_j - \mathbb{I}_{D},
\end{equation}
where $p_j=\mathrm{Tr}(\rho\Pi_j)$ is a normalized probability distribution that can be measured in the laboratory. Furthermore, a tight IC-POVM offers a high fidelity of state 
reconstruction under the presence of errors in state preparation and quantum measurements \cite{S06}.\medskip

Let us show that probability distribution $\{p_j\}$ from Eq.(\ref{rho}) satisfies a special constraint.
\begin{prop}\label{prop1}
Let $\{\Pi_j\}$ be a tight measurement acting on dimension $D$ and having $m$ outcomes. Then, the following relation holds,
\begin{equation}\label{p2j}
\sum_{j=1}^m p^2_j=\frac{D(\mathrm{Tr}(\rho^2)+1)}{m(D+1)}.
\end{equation}
\begin{proof}
Multiplying Eq.(\ref{rho}) by $\rho$ and taking the trace at both sides we arrive 
at the desired result.
\end{proof}
\end{prop}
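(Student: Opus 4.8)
The plan is to read the identity directly off the linear reconstruction formula (\ref{rho}), which already encodes $\rho$ in terms of the outcome probabilities $p_j=\mathrm{Tr}(\rho\Pi_j)$. The only extra ingredients are the normalization $\mathrm{Tr}(\rho)=1$ and the cyclicity of the trace; no property of the tight measurement beyond (\ref{rho}) itself is needed.

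Concretely, I would multiply both sides of (\ref{rho}) by $\rho$ and then take the trace of both sides, using linearity:
\[
\mathrm{Tr}(\rho^2)=\frac{m(D+1)}{D}\sum_{j=1}^m p_j\,\mathrm{Tr}(\Pi_j\rho)-\mathrm{Tr}(\rho)=\frac{m(D+1)}{D}\sum_{j=1}^m p_j^2-1 ,
\]
where in the last equality I substituted $\mathrm{Tr}(\Pi_j\rho)=p_j$ and $\mathrm{Tr}(\rho)=1$. Isolating $\sum_{j=1}^m p_j^2$ then yields precisely (\ref{p2j}).

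I do not expect any genuine obstacle: the whole statement is a one-line consequence of (\ref{rho}), and the manipulation is valid for an arbitrary (pure or mixed) state $\rho$, since (\ref{rho}) holds in that generality. The only point worth flagging is a consistency check — for a pure state $\mathrm{Tr}(\rho^2)=1$, so the purity of the probability vector equals $2D/[m(D+1)]$, which for a SIC-POVM ($m=D^2$) reduces to the familiar value $2/[D(D+1)]$ and for a maximal set of MUBs ($m=D(D+1)$) to $2/(D+1)^2$, matching the values obtained by direct computation.
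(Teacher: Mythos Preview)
Your proof is correct and follows precisely the same route as the paper: multiply the reconstruction formula (\ref{rho}) by $\rho$, take the trace, and solve for $\sum_j p_j^2$. The paper states this in a single sentence without spelling out the intermediate line, but the argument is identical.
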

In particular, Proposition \ref{prop1} holds for any maximal set of MUBs and SIC-POVMs. This proposition induces a sort of Bloch representation for any tight measurement. In particular, for a single-qubit system and maximal set of MUB measurements, three out of six probabilities are independent. By applying a suitable redefinition of these independent coordinates we obtain the standard Bloch representation of a one-qubit state.

One might be tempted to ask for  the reason to restrict our attention to rank-one projective measurements, as multipartite states of an arbitrary rank represent the general case. However, there is a much better understanding of multipartite entanglement for pure states, compared to multipartite mixed states. For instance, already for the two-qutrit system no general method to decide whether a given density matrix of order nine represents an entangled state is known \cite{S16}.  On the other hand, in this work we take advantage of complex projective $t$-designs, which are defined for pure states and naturally induce rank-one tight informationally complete generalized quantum measurements \cite{S06}. A further research propose a generalization of complex projective $t$-designs to mixed quantum states \cite{CGGZ18}.

\section{Entanglement in tight measurements}\label{S3}
In this section we study entanglement properties of multipartite tight measurements. We derive some general results that do not depend on global unitary transformations applied to the entire set of measurement operators. Let us start by showing a simple fact.
\begin{obs}\label{obs1}
An informationally complete set of measurements can be exclusively formed by local operators.
\end{obs}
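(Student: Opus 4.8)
The plan is to reduce the multipartite claim to the single-party case and then build the desired measurement by tensor products. For an individual party, say a qudit of dimension $d$, it is well known that rank-one informationally complete POVMs exist (for instance a SIC-POVM whenever one is available, or, in general, any rank-one tight frame of $m_0\ge d^2$ vectors in $\mathbb{C}^d$ whose linear span, regarded inside the operator space, is all of $\mathcal{B}(\mathbb{C}^d)$). Fix one such subnormalized POVM $\{\Pi^{(0)}_k\}_{k=1}^{m_0}$, with $\Pi^{(0)}_k=\tfrac{d}{m_0}|\varphi^{(0)}_k\rangle\langle\varphi^{(0)}_k|$, for each of the $N$ parties (allowing different local dimensions or different local POVMs changes nothing in the argument), and consider the $m=m_0^{N}$ operators
\begin{equation}
\Pi_{\vec k}=\Pi^{(0)}_{k_1}\otimes\cdots\otimes\Pi^{(0)}_{k_N},\qquad \vec k=(k_1,\dots,k_N).
\end{equation}

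First I would check that $\{\Pi_{\vec k}\}$ is a legitimate subnormalized rank-one POVM of the form (\ref{Pi}). Each $\Pi_{\vec k}$ is a tensor product of positive rank-one operators, hence itself positive and rank one, with underlying vector $|\varphi_{\vec k}\rangle=|\varphi^{(0)}_{k_1}\rangle\otimes\cdots\otimes|\varphi^{(0)}_{k_N}\rangle$, a fully separable (local) pure state; the subnormalization constants multiply to $d^N/m_0^N=D/m$, exactly as prescribed in Eq.~(\ref{Pi}); and, since the total sum factorizes over the parties,
\begin{equation}
\sum_{\vec k}\Pi_{\vec k}=\Bigl(\sum_{k_1}\Pi^{(0)}_{k_1}\Bigr)\otimes\cdots\otimes\Bigl(\sum_{k_N}\Pi^{(0)}_{k_N}\Bigr)=\mathbb{I}_d^{\otimes N}=\mathbb{I}_D .
\end{equation}
Hence $\{\Pi_{\vec k}\}$ is composed exclusively of local operators.

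It remains to verify informational completeness. Writing $V=\mathrm{span}_{\mathbb{C}}\{\Pi^{(0)}_k\}=\mathcal{B}(\mathbb{C}^d)$ for each factor, the linear span of the products $\{\Pi_{\vec k}\}$ equals $V\otimes\cdots\otimes V=\mathcal{B}(\mathbb{C}^d)^{\otimes N}\cong\mathcal{B}(\mathbb{C}^{D})$, which is precisely the condition that the POVM be informationally complete (equivalently, the tomographic reconstruction formula factorizes across the parties). The only step requiring a word of justification is this span identity --- that the tensor product of spanning sets spans the tensor-product space --- which is elementary linear algebra, so the statement carries no genuine obstacle. Its interest is entirely comparative: it sets the stage for the results in the rest of Section~\ref{S3}, where the additional rigidity imposed by the tight (complex projective $2$-design) condition will be shown to forbid any such fully local construction.
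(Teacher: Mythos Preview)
Your argument is correct. You construct a fully separable IC-POVM as the tensor product of single-party IC-POVMs and verify informational completeness via the span identity $\mathrm{span}\{\Pi^{(0)}_k\}^{\otimes N}=\mathcal{B}(\mathbb{C}^d)^{\otimes N}\cong\mathcal{B}(\mathbb{C}^D)$; all steps are sound.

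The paper, however, takes a different and more concrete route: it simply points to \emph{standard quantum tomography}, namely the fully separable eigenbases of the tensor products of the $d$-dimensional generalized Pauli operators, which already form an informationally complete set of local projective measurements (with the well-known overhead of $D^3=d^{3N}$ projectors). Your construction is in some sense more flexible---it works with any single-party IC-POVM and yields a single POVM with $m_0^N$ outcomes rather than a collection of bases---and it makes explicit the structural reason (the tensor-span identity) why locality and informational completeness are compatible. The paper's example, on the other hand, has the advantage of being the canonical laboratory protocol and requires no existence statement about single-party IC-POVMs beyond the Pauli eigenbases, which are always available. Both approaches serve the same purpose here: to contrast with Proposition~\ref{prop2}, which shows that the additional $2$-design constraint of a \emph{tight} IC-POVM rules out any such fully local realisation.
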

For instance, in the case of an $N$--qudit system,
described by a Hilbert space of dimension $D=d^N$,
 the fully separable eigenbases of the tensor product of $d$-dimensional generalized Pauli group allows us to  univocally determine any density matrix of size $D$. This way to reconstruct of quantum states is known as \emph{standard quantum tomography}, and it is highly costly because the number of projective measurements scales as $D^3=d^{3N}$, for $N$ qudit systems \cite{L97}. On the other hand, let us show that a tight measurement cannot be constructed out of local operators.
\begin{prop}\label{prop2}
A tight measurement cannot be exclusively formed by local operators.
\end{prop}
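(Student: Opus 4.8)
The plan is to recast saturation of the Welch bound (\ref{Welch}) at $t=2$ as an operator identity on the symmetric subspace, and then to observe that fully separable rank-one projectors are confined to a proper subspace of it, so that they cannot realize the operator that tightness demands. Concretely, taking the $|\varphi_j\rangle$ to be unit vectors (consistent with $\sum_j\Pi_j=\mathbb{I}_D$), I would introduce the positive semidefinite operator $\mathcal{F}:=\sum_{j}\omega_j\,(|\varphi_j\rangle\langle\varphi_j|)^{\otimes 2}$ on $\mathbb{C}^D\otimes\mathbb{C}^D$. It has unit trace, $\mathrm{Tr}\,\mathcal{F}=\sum_j\omega_j=1$, and its support lies in the symmetric subspace $\mathrm{Sym}^{2}(\mathbb{C}^D)$, of dimension $\binom{D+1}{2}$. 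Since $\mathrm{Tr}(\Pi_i\Pi_j)=(D/m)^{2}|\langle\varphi_i|\varphi_j\rangle|^{2}$, the weighted frame potential is $F_2=(D/m)^{4}\,\mathrm{Tr}(\mathcal{F}^{2})$, a fixed positive multiple of $\mathrm{Tr}(\mathcal{F}^{2})$. By Cauchy--Schwarz, a unit-trace positive operator supported on an $r$-dimensional subspace obeys $\mathrm{Tr}(\mathcal{F}^{2})\ge 1/r$, with equality only for the rescaled projector onto that subspace; hence saturation of (\ref{Welch}) at $t=2$ is equivalent to $\mathcal{F}=\binom{D+1}{2}^{-1}P_{\mathrm{sym}}$, where $P_{\mathrm{sym}}$ is the orthogonal projector onto $\mathrm{Sym}^{2}(\mathbb{C}^D)$.

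Now suppose, towards a contradiction, that for an $N$-partite system with $D=d_1\cdots d_N$, $N\ge2$, $d_k\ge2$, every $|\varphi_j\rangle=\bigotimes_{k=1}^{N}|\psi_j^{(k)}\rangle$ is fully separable. Regrouping the two copies of $\mathbb{C}^D$ as $\bigotimes_{k=1}^{N}(\mathbb{C}^{d_k}\otimes\mathbb{C}^{d_k})$, each term $(|\varphi_j\rangle\langle\varphi_j|)^{\otimes2}=\bigotimes_{k}(|\psi_j^{(k)}\rangle\langle\psi_j^{(k)}|)^{\otimes2}$, and hence $\mathcal{F}$, is supported on $W:=\bigotimes_{k=1}^{N}\mathrm{Sym}^{2}(\mathbb{C}^{d_k})$. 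The swap on $\mathbb{C}^D\otimes\mathbb{C}^D$ factorizes as $S=\bigotimes_k S_k$, so $\mathrm{Sym}^{2}(\mathbb{C}^D)$, being the $(+1)$-eigenspace of $S$, strictly contains $W$: for instance $\mathrm{Asym}^{2}(\mathbb{C}^{d_1})\otimes\mathrm{Asym}^{2}(\mathbb{C}^{d_2})\otimes\bigotimes_{k\ge3}\mathrm{Sym}^{2}(\mathbb{C}^{d_k})$ is nonzero (because $d_1,d_2\ge2$), is contained in $\mathrm{Sym}^{2}(\mathbb{C}^D)$, and is orthogonal to $W$. Therefore $P_{\mathrm{sym}}$, whose range is all of $\mathrm{Sym}^{2}(\mathbb{C}^D)$, is not supported on $W$, so $\mathcal{F}\ne\binom{D+1}{2}^{-1}P_{\mathrm{sym}}$; equivalently $\mathrm{Tr}(\mathcal{F}^{2})\ge(\dim W)^{-1}>\binom{D+1}{2}^{-1}$, and (\ref{Welch}) is not saturated. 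This contradicts tightness, proving Proposition \ref{prop2}. (Observation \ref{obs1} is not in conflict with this: mere informational completeness imposes no constraint of this kind.)

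The conceptual core --- squared product projectors only populate the tensor product of the local symmetric subspaces, which is a proper subspace of the global symmetric subspace --- is elementary. I expect the only slightly delicate points to be the constant bookkeeping in the reformulation step (keeping track of the subnormalization $D/m$ and the weights so that $F_2$ really is the minimum of a Hilbert--Schmidt norm over admissible operators) and stating cleanly that a positive operator supported on $W$ can never be proportional to $P_{\mathrm{sym}}$.
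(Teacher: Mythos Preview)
Your proof is correct and takes a genuinely different route from the paper's. The paper argues via the state-reconstruction formula (\ref{rho}): it assumes the tight POVM is a tensor product of two tight POVMs $\{\Pi^A_j\otimes\Pi^B_k\}$ running over all index pairs, inserts the state $\rho_A\otimes\mathbb{I}_{D_B}$, takes the partial trace, and observes that the resulting reduced formula contradicts (\ref{rho}) for the subsystem whenever $D_B\ge2$. Your argument instead uses the standard operator characterization of weighted $2$-designs, namely that tightness forces $\sum_j\omega_j(|\varphi_j\rangle\langle\varphi_j|)^{\otimes2}$ to be proportional to the full symmetric projector $P_{\mathrm{sym}}$ on $\mathrm{Sym}^2(\mathbb{C}^D)$, and then shows that product vectors can only populate the strictly smaller subspace $\bigotimes_k\mathrm{Sym}^2(\mathbb{C}^{d_k})$.

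What your approach buys is generality and transparency: you cover \emph{any} collection of fully separable rank-one operators, whereas the paper's argument, as written, assumes the additional structure that the local factors themselves form tight measurements and that all index combinations occur (this is what makes identity (\ref{identity}) valid). Your route also yields the quantitative gap $\mathrm{Tr}(\mathcal{F}^2)\ge\bigl(\prod_k\binom{d_k+1}{2}\bigr)^{-1}>\binom{D+1}{2}^{-1}$ directly. The paper's approach, on the other hand, stays closer to the physically motivated reconstruction formula and foreshadows the nested-measurement analysis of Section~\ref{S4}. Your own caveat about the constant bookkeeping is well placed: the equivalence between saturation of (\ref{Welch}) at $t=2$ and $\mathcal{F}\propto P_{\mathrm{sym}}$ is the only place where the subnormalization $D/m$ and the weights $\omega_j$ need to be tracked carefully, but that identification is standard (cf.\ \cite{S06}).
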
 
\begin{proof}
Suppose that $\{\Pi_j\}$ is a bipartite tight IC-POVM such that $\Pi_j=\Pi^A_j\otimes\Pi^B_j$, where $\{\Pi^A_j\}$ and $\{\Pi^B_j\}$ are tight measurements having $m_A$ and $m_B$ outcomes and acting on dimensions $D_A$ and $D_B$, respectively. From Eq.(\ref{rho}) we have
\begin{equation}\label{tensorIC}
    \rho = (D_AD_B + 1) \frac{m_Am_B}{D_AD_B} \sum_{j=1}^m p_j \Pi^A_j\otimes\Pi^B_j - \mathbb{I}_{D_AD_B}.
\end{equation}
Let us consider the separable state $\rho=\rho_A\otimes\mathbb{I}_{D_B}$ and the identities 
\begin{equation}
p_j=\mathrm{Tr}[((\rho_A)\otimes\mathbb{I}_{D_B})(\Pi^A_j\otimes\Pi^B_j)]=\frac{p^A_j}{m_B},
\end{equation}
where $p^A_j=\mathrm{Tr}(\rho_A\Pi^A_j)$ and
\begin{equation}\label{identity}
\sum_{j=1}^{m} p_j \Pi^A_j=m_B\sum_{j=1}^{m_A} p_j \Pi^A_j.
\end{equation}
Identity (\ref{identity}) holds because of the $m=m_Am_B$ operators $\Pi^A_j\otimes\Pi^B_k$ cover all possible combination of indices $j=1,\dots,m_A$ and $k=1,\dots,m_N$. 
Thus, taking partial trace over the subsystem $B $ in Eq.(\ref{tensorIC}) we obtain
\begin{equation}\label{rhoA}
    \rho_A = \frac{m_A(D_AD_B + 1)}{D_A} \sum_{j=1}^{m_A} p_j \Pi^A_j - D_B\,\mathbb{I}_{D_A}.
\end{equation}
From here we find a contradiction, as the reduced measurement $\{\Pi^A_j\}$ would not be tight for any dimension $D_B\geq2$. That is, Eq.(\ref{rho}) does not hold for any $D_B\geq2$, which concludes the proof. For multipartite systems the proof follows in the same way by considering every possible bipartition.
\end{proof}
Tensor product of tight measurements defines an informationally complete set of measurements, but not a tight one. For instance, the set of $N$ partite measurements composed by product of monopartite SIC-POVM is informationally complete and optimal among all product measurements \cite{ZE11}. However, the fact that the resulting multipartite measurement is not tight has important consequences: \emph{robustness of fidelity reconstruction under the presence of measurement errors decreases exponentially with the number of parties, with respect to the tight SIC-POVM} \cite{ZE11}. 

In Section \ref{S4}, we show that some products of tight measurements can be complemented with POVM composed of maximally entangled states in such a way that the entire set forms a tight measurement.\medskip

Let us now consider further classes of multipartite entangled states. A multipartite pure quantum state is called $k$-\emph{uniform} if every reduction to $k$ parties is maximally mixed \cite{S04,AC13}. Let us present the following result.
\begin{prop}\label{prop3}
Tight measurements cannot be exclusively composed of $k$-uniform states for any $k\geq1$.
\end{prop}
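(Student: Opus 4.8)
The plan is to use only one consequence of tightness: the unit vectors $\{|\varphi_j\rangle\}$ of a tight measurement $\Pi_j=\frac{D}{m}|\varphi_j\rangle\langle\varphi_j|$ form a complex projective $2$-design, which is equivalent to the operator identity
\begin{equation}\label{twodes}
\sum_{j=1}^{m}|\varphi_j\rangle\langle\varphi_j|\otimes|\varphi_j\rangle\langle\varphi_j|=\frac{m}{D(D+1)}\bigl(\mathbb{I}_{D^{2}}+\mathbb{S}\bigr),
\end{equation}
where $\mathbb{S}$ denotes the swap operator on $\mathbb{C}^{D}\otimes\mathbb{C}^{D}$. This is the content of the reconstruction formula (\ref{rho}) applied to an arbitrary state $\rho$; equivalently it follows from saturation of (\ref{Welch}) at $t=2$, because the left-hand side of (\ref{twodes}), call it $X$, is a positive operator supported on the symmetric subspace $\mathcal{S}\subset\mathbb{C}^{D}\otimes\mathbb{C}^{D}$ with $\mathrm{Tr}\,X=m$, and the Cauchy--Schwarz bound $\mathrm{Tr}(X^{2})\geq(\mathrm{Tr}\,X)^{2}/\dim\mathcal{S}$ --- whose equality case is exactly what (\ref{Welch}) imposes for $t=2$ --- is attained precisely when $X$ is proportional to the orthogonal projector onto $\mathcal{S}$, i.e. to $(\mathbb{I}_{D^{2}}+\mathbb{S})/2$. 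The first step is therefore to record (\ref{twodes}), with the sub-normalisation constants verified, as the only input about tightness the argument will use.

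Next I would argue by contradiction. Suppose every $|\varphi_j\rangle$ is a $k$-uniform state of a multipartite space $\mathbb{C}^{d_1}\otimes\cdots\otimes\mathbb{C}^{d_N}$ with $D=d_1\cdots d_N$ (necessarily $N\geq2$, since a $k$-uniform pure state requires $k<N$). Since $k\geq1$, the reduction of any $k$-uniform state to a single party is maximally mixed on that party --- for $k\geq2$ by tracing out the remaining $k-1$ parties --- so $\mathrm{Tr}_{2,\dots,N}\,|\varphi_j\rangle\langle\varphi_j|=\mathbb{I}_{d_1}/d_1$ for every $j$. Apply to both sides of (\ref{twodes}) the partial trace over all parties of both tensor copies except party $1$ of each copy. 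The left-hand side collapses to $\sum_{j}(\mathbb{I}_{d_1}/d_1)^{\otimes2}=(m/d_1^{2})\,\mathbb{I}_{d_1}\otimes\mathbb{I}_{d_1}$; on the right-hand side $\mathbb{I}_{D^{2}}$ reduces to $(D^{2}/d_1^{2})\,\mathbb{I}_{d_1}\otimes\mathbb{I}_{d_1}$ while, factoring the global swap into local swaps $\mathbb{S}=\bigotimes_{i=1}^{N}\mathbb{S}_i$ (with $\mathbb{S}_i$ exchanging the two copies of party $i$) and using $\mathrm{Tr}\,\mathbb{S}_i=d_i$, the swap reduces to $(D/d_1)\,\mathbb{S}_1$. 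Hence
\begin{equation}\label{redeq}
\frac{m}{d_1^{2}}\,\mathbb{I}_{d_1}\otimes\mathbb{I}_{d_1}=\frac{m}{D(D+1)}\Bigl(\frac{D^{2}}{d_1^{2}}\,\mathbb{I}_{d_1}\otimes\mathbb{I}_{d_1}+\frac{D}{d_1}\,\mathbb{S}_1\Bigr).
\end{equation}

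Solving (\ref{redeq}) gives $\mathbb{S}_1=\frac{1}{d_1}\,\mathbb{I}_{d_1}\otimes\mathbb{I}_{d_1}$, which is impossible as soon as $d_1\geq2$, since the swap operator then has $-1$ among its eigenvalues (its antisymmetric subspace is non-trivial). As at least one party must have dimension $\geq2$ --- otherwise $D=1$ --- one reruns the argument tracing down to that party, so the contradiction holds in all cases and the proposition follows. Once (\ref{twodes}) is available the rest is a short computation; the only delicate point is the partial-trace bookkeeping, specifically the factorisation $\mathbb{S}=\bigotimes_i\mathbb{S}_i$ of the total swap and the elementary identities $\mathrm{Tr}\,\mathbb{S}_i=d_i$, which is where I expect the (modest) main obstacle to be.
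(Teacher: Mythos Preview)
Your argument is correct. Both your proof and the paper's hinge on the same observation --- that $k$-uniform states have maximally mixed single-party marginals, so tracing out all but one party collapses the tight-measurement structure to something incompatible with the required identity --- but you arrive at the contradiction through a different (equivalent) formulation of tightness. The paper invokes the state-reconstruction formula~(\ref{rho}) directly: if every $\Pi_j$ is $k$-uniform, then the right-hand side of~(\ref{rho}) has maximally mixed $k$-party reductions for \emph{every} choice of probabilities $p_j$, so the formula cannot reproduce a state $\rho$ whose reductions are not maximally mixed; that is the whole proof, in one line. You instead work with the raw $2$-design operator identity~(\ref{twodes}), trace both copies down to a single party, and obtain the impossible equality $\mathbb{S}_1=\tfrac{1}{d_1}\mathbb{I}_{d_1}\otimes\mathbb{I}_{d_1}$. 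Your route is longer but more self-contained: it does not ask the reader to accept~(\ref{rho}) as a prepackaged consequence of tightness, and the contradiction (a spectral impossibility for the swap) is perhaps more tangible than ``the reconstruction fails''. The paper's route, on the other hand, is immediate once~(\ref{rho}) is in hand and extends transparently to reductions of any size $k$, not just a single party.
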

\begin{proof}
Suppose there is a tight IC-POVM such that every measurement operator is a $k$-uniform state. Therefore, reconstruction formula (\ref{rho}) does not work for states $\rho$ having non-maximally mixed reductions.
\end{proof}
 
Propositions \ref{prop2} and \ref{prop3} reveal that for a tight measurement only intermediate degrees of average entanglement are allowed, where the average is taken over all measurement operators forming the POVM. There is a simple argument to estimate the mean amount of entanglement characterising the quantum states which lead to a tight measurement. Consider a quantum system composed by $N$ qudits having $d$ levels each, where the total dimension of the Hilbert space is $D=d^N$. Also consider reductions $\rho_{X_i}=Tr_{\overline{X}_i}(\rho)$, where $X_i$ denotes the $i$-th subset of $k$ out of $N$ parties, $\overline{X}_i$ is the complementary set and $i=1,\dots,\binom{N}{k}$. For instance, for $N=4$ and $k=2$ we have $X_1=\{1,2\}$, $X_2=\{1,3\}$, $X_3=\{1,4\}$, $X_4=\overline{X}_3=\{2,3\}$, $X_5=\overline{X}_2=\{2,4\}$, $X_6=\overline{X}_1=\{3,4\}$. A tight measurement is a 2-design, which implies that the average purity of reductions over a tight measurement coincides with average purity over the entire set of quantum states, according to the Haar measure distribution. That is,
\begin{equation}\label{Haar}
\frac{1}{m}\sum_{j=1}^m\mathrm{Tr}({\sigma_j}^2_{X_i})=\langle\mathrm{Tr}(\rho^2_{X_i})\rangle_{\mathrm{Haar}},
\end{equation}
where $\sigma_j=\frac{m}{D}\Pi_j=|\varphi_j\rangle\langle\varphi_j|$ are normalised projectors and $\rho_{X_i}$ is a $k$-qudit reduction of the state $\rho$, with respect to the subset of parties $X_i$. Using the average values derived by Lubkin \cite{L78}, we have
\begin{equation}\label{average}
\frac{1}{m}\sum_{j=1}^m\mathrm{Tr}({\sigma_j^2}_{X_i})=\frac{d^k+d^{N-k}}{d^N+1},
\end{equation}
for every reduction to $k$ parties $X_i$. In the particular case of maximal sets of MUB for bipartite systems this expression reduces to Eqs.(3-5) in Ref.  \cite{WPZ11}. From Eq.(\ref{average}), we see that the average purity of reductions of a tight measurement behaves asymptotically as
\begin{equation}\label{asympt}
\frac{1}{m}\sum_{j=1}^m\mathrm{Tr}({\sigma_j^2}_{X_i})\rightarrow\frac{1}{d^k}.
\end{equation}
This statement holds if either the number of levels $d$ or the number of parties $N$ is large, for any $k\leq N/2$. Therefore, projectors $\sigma_j$ forming a tight measurement asymptotically become close to \emph{absolutely maximally entangled} (AME) states in large Hilbert spaces. We recall that AME states for $N$ qudit systems are $k$-uniform states for the maximal possible value $k=\lfloor N/2\rfloor$ \cite{HCRLL12}. For instance, the two-qudit generalized Bell states and three-qubit GHZ states belong to this class. 

Despite the  asymptotic behavior described above
it is always possible to apply a rigid rotation to a tight IC-POVM in such a way 
that some measurement operators become fully separable, i.e. they form local measurements. 
An interesting question concerns establishing how many of the measurement operators can be chosen to be fully separable. From Eq.(\ref{average}) we can estimate this maximal possible number $m_{sep}$ of fully separable operators, i.e. those satisfying the constraint $\mathrm{Tr}({\rho_j^2}_{X_i})=1$, $j=1,\dots,m_{sep}$, for every subset $X_i$. To this end, we impose the extremal condition that the remaining $m-m_{sep}$ operators are $k$-uniform states, i.e. $\mathrm{Tr}({\sigma_j}^2_{X_i})=1/d^k$, for every subset $X_i$ and every $j=m_{sep}+1,\dots,m$. As a consequence, we arrive at the following statement.
\begin{prop}\label{prop4}
Suppose that a tight measurement for $N$ qudit systems is composed by $m_{sep}$ fully separable operators. Also, suppose that among the $m-m_{sep}$ remaining operators  there is at least a $k$-uniform state, where $k$ denotes the maximal possible uniformity. Then, the following relation holds:
\begin{equation}\label{boundm_{sep}}
m_{sep}\,(d^N+1)\leq m\,(d^k+1).
\end{equation}
This inequality is saturated if and only if the remaining $m-m_{sep}$ states are $k$-uniform, for a given $k\geq1$.
\end{prop}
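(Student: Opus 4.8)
The plan is to evaluate the exact average-purity identity~(\ref{average}) and to extract the bound from a one-line rearrangement. First I would fix an arbitrary $k$-element subset $X_i$ of the $N$ parties and split the sum in~(\ref{average}) into the $m_{sep}$ fully separable projectors and the $m-m_{sep}$ remaining ones. A fully separable pure state $|\varphi_j\rangle$ has all of its reductions pure, so $\mathrm{Tr}({\sigma_j}_{X_i}^2)=1$ for $j=1,\dots,m_{sep}$. After tracing out $\overline{X}_i$, each remaining projector is a density operator on the $d^{k}$-dimensional space of the parties in $X_i$, hence has purity at least $1/d^{k}$; the hypothesis that at least one remaining operator is genuinely $k$-uniform forces $k\le N/2$, so $d^{k}\le d^{N-k}$ and $1/d^{k}$ is the sharp lower bound. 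Substituting into~(\ref{average}) turns the identity into the inequality
\[
m\,\frac{d^{k}+d^{N-k}}{d^{N}+1}\ \ge\ m_{sep}+\frac{m-m_{sep}}{d^{k}} .
\]

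Next I would isolate $m_{sep}$. Collecting its coefficient gives $m_{sep}\,\frac{d^{k}-1}{d^{k}}\le m\!\left(\frac{d^{k}+d^{N-k}}{d^{N}+1}-\frac{1}{d^{k}}\right)$, and putting the right-hand bracket over the common denominator $d^{k}(d^{N}+1)$ collapses it to $\frac{d^{2k}-1}{d^{k}(d^{N}+1)}=\frac{(d^{k}-1)(d^{k}+1)}{d^{k}(d^{N}+1)}$. Since $d\ge 2$ and $k\ge 1$, the factor $\frac{d^{k}-1}{d^{k}}$ is strictly positive and can be cancelled from both sides, leaving precisely $m_{sep}(d^{N}+1)\le m(d^{k}+1)$, which is~(\ref{boundm_{sep}}).

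For the saturation statement I would return to the only inequality that was used, $\sum_{j>m_{sep}}\mathrm{Tr}({\sigma_j}_{X_i}^2)\ge (m-m_{sep})/d^{k}$. If equality holds in~(\ref{boundm_{sep}}) then, because~(\ref{average}) is an \emph{exact} identity valid for each of the $\binom{N}{k}$ subsets $X_i$, this inequality must already be tight for every $X_i$ separately; since each summand is at least $1/d^{k}$, tightness forces $\mathrm{Tr}({\sigma_j}_{X_i}^2)=1/d^{k}$ for all $j>m_{sep}$ and all $X_i$, i.e.\ every remaining projector has maximally mixed $k$-party reductions, which is exactly the definition of a $k$-uniform state. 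Conversely, inserting $\mathrm{Tr}({\sigma_j}_{X_i}^2)=1/d^{k}$ for the remaining states back into~(\ref{average}) and undoing the algebra above returns the equality $m_{sep}(d^{N}+1)=m(d^{k}+1)$.

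I do not expect a genuine obstacle: the argument is a direct evaluation of~(\ref{average}) combined with the elementary fact that a density operator on a $D'$-dimensional space has purity at least $1/D'$. The only points deserving care are that~(\ref{boundm_{sep}}) follows from \emph{any} single choice of $X_i$, while the equality case must be argued simultaneously over all $k$-party subsets in order to upgrade ``maximally mixed on $X_i$'' to ``$k$-uniform'', and that~(\ref{average}) is stated for the normalised projectors $\sigma_j=\frac{m}{D}\Pi_j$ of~(\ref{Haar}), so it is $\sigma_j$, not $\Pi_j$, that enters the computation.
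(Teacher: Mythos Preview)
Your proposal is correct and follows essentially the same route as the paper: both proofs start from the exact average-purity identity~(\ref{average}), split the sum into the $m_{sep}$ separable projectors (purity~$1$) and the remaining ones (purity $\geq 1/d^{k}$), and rearrange to obtain~(\ref{boundm_{sep}}). Your write-up is in fact considerably more detailed than the paper's---you carry out the algebra explicitly, note why the hypothesis of at least one $k$-uniform state forces $k\le N/2$ so that $1/d^{k}$ is actually attainable, and you are careful to argue the equality case simultaneously over all $\binom{N}{k}$ subsets $X_i$, whereas the paper simply states that the inequality becomes strict ``when not all of the $m-m_{sep}$ entangled states are $k$-uniform''.
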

\begin{proof}
Consider Eq.(\ref{average}), where $m_{sep}$ reductions are pure states and $m-m_{sep}$ reductions are maximally mixed for every possible subset $X_i$ consisting of $k$ out of $N$ parties. Therefore we have
 \begin{equation}\label{sep-ent}
\frac{1}{m}[m_{sep}+d^{-k}(m-m_{sep})]=\frac{d^k+d^{N-k}}{d^N+1}.
\end{equation}
From here we obtain Eq.(\ref{boundm_{sep}}) as an equality. The inequality occurs because the possible number of separable vectors $m_{sep}$ is strictly lower when not all of the $m-m_{sep}$ entangled states are $k$-uniform.
\end{proof}
For example, for two-qubit systems ($d=N=2$), a tight measurement composed by at least one Bell state ($k=1$) satisfies $m_{sep}/m\leq3/5$. Note that this inequality is saturated by a maximal set of 5 MUB composed by 3 separable and 2 maximally entangled bases ($m_{sep}/m=12/20=3/5$). In general, Proposition \ref{prop4} implies that a maximal set of $D+1$ mutually unbiased bases existing in dimension $D=d^N$ cannot contain more than $d^k+1$ fully separable bases, where $k\leq\lfloor N/2\rfloor$. Moreover, from the fact that at most $d+1$ fully separable MUBs exist for $N$ qudit systems \cite{L11},  we arrive to the following result.
\begin{corol}\label{corol1}
A maximal set of $d^N+1$ MUBs for an $N$--qudit system composed exclusively
of fully separable and $k$-uniform states is only possible for $k=1$. 
\end{corol}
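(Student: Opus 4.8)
The plan is to combine the extremal case of Proposition~\ref{prop4} with the known bound on the number of fully separable bases inside a maximal set of mutually unbiased bases. First I would note that, by hypothesis, every measurement operator of the MUB set is either fully separable or $k$-uniform; there are no ``intermediate'' states, so the $m-m_{sep}$ non-separable operators are \emph{all} $k$-uniform. This is precisely the extremal configuration in which the inequality of Proposition~\ref{prop4} is saturated. Since for a maximal MUB set one has $m=D(D+1)$ with $D=d^{N}$, solving the saturated identity $m_{sep}(d^{N}+1)=m(d^{k}+1)$ gives
\begin{equation}
m_{sep}=d^{N}(d^{k}+1).
\end{equation}

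Next I would translate this count of separable \emph{operators} into a count of fully separable \emph{bases}. Each of the $d^{N}+1$ orthonormal bases contains exactly $D=d^{N}$ vectors, so — under the natural hypothesis for MUB families that each basis consists either entirely of fully separable states or entirely of $k$-uniform states — the separable vectors assemble into $m_{sep}/d^{N}=d^{k}+1$ fully separable bases. Finally I would invoke the result of Ref.~\cite{L11}: a maximal set of MUBs for an $N$-qudit system contains at most $d+1$ fully separable bases. Comparing, $d^{k}+1\le d+1$, hence $d^{k}\le d$; since $k\ge 1$ this forces $k=1$, which is the claim.

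The main obstacle is the second step — passing from separable operators to separable bases. Strictly, Proposition~\ref{prop4} only pins down the total number $m_{sep}$ of fully separable operators through Eq.~(\ref{average}), and one must rule out ``mixed'' bases carrying both separable and $k$-uniform vectors (which can occur in general, e.g.\ a two-qubit basis built from two product vectors and two Bell vectors). The cleanest rigorous route is to reason directly from the saturated value $m_{sep}=d^{N}(d^{k}+1)$: if $k\ge 2$ then $d^{k}+1>d+1$, so the $d+1$ genuinely product bases allowed by Ref.~\cite{L11} account for only $(d+1)d^{N}<m_{sep}$ separable vectors, and the deficit would have to be supplied by product vectors lying inside non-product bases of the family; bounding how many product vectors such a basis can contain (together with the requirement that the reconstruction formula~(\ref{rho}) still holds) then yields a contradiction. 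Alternatively, and most transparently, one restricts — as is standard for MUB constructions — to families whose bases are homogeneous with respect to the entanglement type, in which case the counting above is immediate.
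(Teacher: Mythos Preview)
Your approach is essentially identical to the paper's: saturate Proposition~\ref{prop4} for a maximal MUB set to get $m_{sep}=d^{N}(d^{k}+1)$, read this as $d^{k}+1$ fully separable bases, and confront it with the bound $d+1$ from Ref.~\cite{L11} to force $k=1$. The homogeneity caveat you raise is fair, but the paper's own derivation (the sentence immediately preceding the corollary) glosses over it in exactly the same way, so your argument is at least as complete as the original.
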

For instance, Corollary \ref{corol1} forbids existence of a maximal set of 5-qubit MUB exclusively composed by fully separable and absolutely maximally entangled states \cite{HCRLL12}, which are $k=2$ uniform. There is another interesting consequence of Proposition \ref{prop4}. For a fixed number of parties $N$ and reductions $k\leq N/2$, left hand side of Eq.(\ref{boundm_{sep}}) increases faster than the right hand side, as a function of local dimension $d$. As a consequence, there is an upper bound on $d$ for the existence of a tight measurement exclusively composed by the union set of fully separable and $k$-uniform states.
\begin{corol}\label{corol2}
A tight measurement composed by $m$ rank-one projectors, where $m_{sep}\leq m$ are fully separable, can only exist for local dimensions $d\leq d_{max}$, where $d_{max}$ is implicitly defined as
\begin{equation}
m_{sep}\,(d_{max}^N+1)=m\,(d_{max}^{N/2}+1).
\end{equation}
\end{corol}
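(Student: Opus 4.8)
The plan is to read Eq.~(\ref{boundm_{sep}}), with $k$ taken as large as possible, as an inequality constraining the local dimension $d$ once the ratio $m_{sep}/m$ has been fixed, and then to invert it by showing that the function controlling the right-hand side is strictly decreasing in $d$.

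First I would apply Proposition~\ref{prop4} with the maximal admissible uniformity $k=\lfloor N/2\rfloor$. Any $k$-qudit reduction of a normalised pure state has purity at least $d^{-k}$, so the computation behind Eq.~(\ref{sep-ent}), read as an inequality rather than an equality, shows that any tight $N$-qudit measurement with $m_{sep}$ fully separable operators obeys
\begin{equation}
m_{sep}\,(d^{N}+1)\le m\,(d^{\lfloor N/2\rfloor}+1)\le m\,(d^{N/2}+1),
\end{equation}
the last step because $\lfloor N/2\rfloor\le N/2$. Equivalently, writing
\begin{equation}
h(d):=\frac{d^{N/2}+1}{d^{N}+1},
\end{equation}
existence of such a measurement forces $m_{sep}/m\le h(d)$.

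Next I would study $h$ on $[1,\infty)$. Setting $x=d^{N/2}\ge1$ one has $h=\frac{x+1}{x^{2}+1}$, whose derivative in $x$ equals $\frac{1-2x-x^{2}}{(x^{2}+1)^{2}}$, which is negative for all $x\ge1$; since $x=d^{N/2}$ is strictly increasing in $d$, the map $d\mapsto h(d)$ is strictly decreasing on $[1,\infty)$, with $h(1)=1$ and $h(d)\to0$ as $d\to\infty$. Hence $h$ restricts to a strictly decreasing continuous bijection $[1,\infty)\to(0,1]$, so for every value $m_{sep}/m\in(0,1]$ there is a unique real $d_{max}\ge1$ with $h(d_{max})=m_{sep}/m$, i.e. with $m_{sep}(d_{max}^{N}+1)=m(d_{max}^{N/2}+1)$. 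Monotonicity then converts $m_{sep}/m\le h(d)$ into $d\le d_{max}$, which is the claim; and the degenerate cases are consistent with the rest of the paper, since $m_{sep}=m$ yields $d_{max}=1$ (no genuine multipartite tight measurement, matching Proposition~\ref{prop2}) and $m_{sep}=0$ yields a vacuous bound, $h>0$ everywhere.

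The only step requiring genuine care is the monotonicity claim: one must verify that $\frac{d^{N/2}+1}{d^{N}+1}$ is strictly decreasing for \emph{every} $N\ge1$ and $d\ge1$, so that $d_{max}$ is well defined and the inequality can legitimately be inverted, and one must check that the slack in Proposition~\ref{prop4}---coming both from using $k=\lfloor N/2\rfloor$ in place of $N/2$ and from reductions that need not be exactly $d^{-k}$-pure---only tightens the bound, so that the $d_{max}$ defined in the statement is still a valid upper bound on the integer dimension.
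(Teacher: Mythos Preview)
Your argument is correct and follows the same route as the paper: both invoke Proposition~\ref{prop4} with the maximal uniformity $k=N/2$ (the paper just notes this choice is ``an overestimation'', e.g.\ for $N=4$, $d=2$), and then read the resulting inequality as a constraint on $d$. You go further than the paper by explicitly verifying the strict monotonicity of $h(d)=(d^{N/2}+1)/(d^{N}+1)$ so that $d_{max}$ is well defined and the inequality can be legitimately inverted, and by tracking the $\lfloor N/2\rfloor$ versus $N/2$ slack and the degenerate endpoints---details the paper leaves implicit.
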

\begin{proof}
Consider Eq.(\ref{boundm_{sep}}) from Proposition \ref{prop4} and the maximal possible value for $k=N/2$. In general, this assumption is an overestimation (e.g. for $N=4$ and $d=2$ the value $k=2$ is not possible \cite{HS00}).
\end{proof}
 The only examples saturating the value $d=d_{max}$ established by Corollary \ref{corol2}
 is achieved by the two-qudit MUB for prime power values of $d$, as far as we know.\medskip
 
Let us now consider an implication for SIC-POVMs.
\begin{corol}\label{corol3}
For any $k\geq1$, any $N$-qudit SIC-POVM cannot be composed of 
fully separable and $k$-uniform states.
\end{corol}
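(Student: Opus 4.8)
The plan is to derive a contradiction from the equality case of Proposition \ref{prop4} by an elementary divisibility argument. First I would clear away the degenerate situations. If $k>\lfloor N/2\rfloor$ there are no $k$-uniform states on $N$ qudits at all: for a pure $N$-partite state the reduction to any $k$ parties has the same rank as the reduction to the complementary $N-k$ parties, which is at most $d^{N-k}<d^{k}$, so it cannot be maximally mixed. In that case a measurement ``composed of fully separable and $k$-uniform states'' is in fact composed entirely of fully separable operators; the same holds if, for an admissible $k$, the SIC-POVM happens to contain no $k$-uniform operator. Since a SIC-POVM is a tight measurement, both possibilities are forbidden by Proposition \ref{prop2}. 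Hence we may assume $1\leq k\leq\lfloor N/2\rfloor$ and that at least one measurement operator is $k$-uniform, so that Proposition \ref{prop4} is applicable.

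Next I would specialise Proposition \ref{prop4} to a SIC-POVM, for which $m=D^{2}=d^{2N}$. By hypothesis every one of the $m-m_{sep}$ non-separable operators is $k$-uniform, so the bound (\ref{boundm_{sep}}) is saturated, giving
\[
m_{sep}\,(d^{N}+1)=d^{2N}\,(d^{k}+1).
\]
Reducing this identity modulo $d^{N}+1$ and using $d^{N}\equiv-1$, hence $d^{2N}\equiv1$, we obtain $d^{k}+1\equiv0\pmod{d^{N}+1}$, i.e. $(d^{N}+1)\mid(d^{k}+1)$. But for $d\geq2$ and $1\leq k\leq\lfloor N/2\rfloor<N$ one has $0<d^{k}+1<d^{N}+1$, so no such divisibility can hold; this contradiction rules out the assumed SIC-POVM and proves the corollary.

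I expect no real technical difficulty in this argument. The only step that deserves care is the preliminary bookkeeping — confirming that $k$ lies in the admissible range $k\leq\lfloor N/2\rfloor$ and that a genuine $k$-uniform operator is present, so that the equality case of Proposition \ref{prop4} genuinely applies — after which the saturation identity for SIC-POVMs together with the coprimality of $d^{N}+1$ and $d^{2N}$ closes the proof with essentially no computation.
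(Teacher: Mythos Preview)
Your proof is correct and follows the same route as the paper: specialise the saturation case of Proposition~\ref{prop4} to $m=d^{2N}$ and observe that $m_{sep}=\dfrac{(1+d^{k})\,d^{2N}}{1+d^{N}}$ cannot be an integer. The paper merely asserts this non-integrality, whereas you supply the explicit divisibility argument (via $d^{2N}\equiv 1\pmod{d^{N}+1}$) and also handle the degenerate cases $k>\lfloor N/2\rfloor$ and ``no $k$-uniform operator present'' more carefully than the paper does.
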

\begin{proof}
Inequality $m_{sep}\leq\frac{(1+d^k)d^{2N}}{1+d^N}$ comes from considering $m=d^2$ in Eq.(\ref{boundm_{sep}}). According to Proposition \ref{prop4} we require saturation of the inequality in order to have a POVM exclusively composed by $m_{sep}$ fully separable and $m-m_{sep}$ $k$-uniform elements. However, the inequality cannot be saturated as the upper bound is not an integer number for any value of the involved parameters.
\end{proof}
Corollary \ref{corol3} implies that at most nine separable vectors are allowed by a two-qubit SIC-POVM. However, by considering exhaustive numerical optimization of 2-nd weighted frame potential (\ref{potential}), we have found no more than \emph{five} separable vectors (see Appendix \ref{Ap2}).  Corollaries \ref{corol1} and \ref{corol3} reveal fundamental differences existing between MUBs and SIC-POVMs from the point of view of quantum entanglement.\medskip

Before ending this section, let us emphasise that isoentangled tight measurement exist for multipartite systems. The most remarkable example is provided by the so-called \emph{Hoggar lines} \cite{H98}, a special kind of SIC-POVM existing for three-qubit systems. They are given by the following 64 states:
\begin{equation}\label{sssphi}
|\phi_{ijk}\rangle=\sigma_i\otimes\sigma_j\otimes\sigma_k |\phi_{000}\rangle,
\end{equation}
where $\sigma_0=\mathbb{I}_2$, while $\sigma_1$, $\sigma_2$ and $\sigma_3$ are the Pauli matrices. Hoggar lines are isoentangled, as all its 64 elements $|\phi_{ijk}\rangle$ are related by local unitary operations through Eq.(\ref{sssphi}). Among the entire set of 240 Hoggar lines \cite{Z12}, there are only two inequivalent fiducial states under local unitary operators, which can be written in the following symmetric form \cite{G17}:
\begin{eqnarray}
|\phi^{(1)}_{000}\rangle&=&\frac{1}{\sqrt{6}}\bigl(|000\rangle+i(|011\rangle+|101\rangle+|110\rangle)-(1-i)|011\rangle\bigr),\nonumber\\
|\phi^{(2)}_{000}\rangle&=&\frac{1}{\sqrt{6}}\bigl(|000\rangle+i(|011\rangle+|101\rangle+|110\rangle)-(1+i)|011\rangle\bigr).
\end{eqnarray}
A symmetric fiducial state equivalent to $|\phi^{(2)}_{000}\rangle$ was already noted by Jedwab  and  Wiebe \cite{JW15}. Symmetric fiducial state for Hoggar lines is unique up to unitary or anti-unitary transformations. However, from the point of view of quantum information theory, fiducial states $|\phi^{(1)}_{000}\rangle$ and $|\phi^{(2)}_{000}\rangle$ are essentially different, as they have different values of the three-tangle $\tau_{ABC}$, a three-qubit entanglement invariant \cite{CKW00}. These values are given by $\tau^{(1)}_{ABC}=2/3$ and $\tau^{(2)}_{ABC}=2/9$, respectively. Even though these states are inequivalent under local unitary operations, the state $|\phi^{(1)}_{000}\rangle$ can be obtained from $|\phi^{(2)}_{000}\rangle$ by considering stochastic local operations, as both states belong to the GHZ class of states characterized by the restriction $\tau_{ABC}>0$.

For bipartite systems, any existing tight measurement has an average amount of entanglement -- quantified by the purity of reductions -- that only depends on the total dimension of the Hilbert space and the dimension of the reduced space, as we have seen in Eq. (\ref{Haar}). We observe that the same result \emph{cannot be generalized} to multipartite systems.
\begin{prop}\label{prop5}
For three-partite systems, the average purity of any single particle reductions of a tight measurement is not uniquly  determined by the dimension of the Hilbert space D and the dimension of the reduction.
\end{prop}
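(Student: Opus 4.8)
The plan is to exhibit two tight measurements acting on one and the same tripartite Hilbert space $\mathbb{C}^{d}\otimes\mathbb{C}^{d}\otimes\mathbb{C}^{d}$ whose average single-particle purities disagree, even though the total dimension $D=d^{3}$ and the dimension $d$ of the single-party reduction coincide. The key observation is that the equality in Eq.~(\ref{Haar}), and hence the value $\frac{d+d^{2}}{d^{3}+1}$ obtained in Eq.~(\ref{average}), uses that the weight function is uniform, $\omega_{j}=1/m$, as is the case for SIC-POVMs and complete sets of MUBs but not for tight measurements in general. As recalled above (see also \cite{S06,RS07}), a generic tight measurement is only a \emph{weighted} complex projective $2$-design, so the $2$-design identity forces the $\omega$-weighted average $\sum_{j}\omega_{j}\,\mathrm{Tr}(\sigma_{j,1}^{2})$ of the single-particle purities to equal $\frac{d+d^{2}}{d^{3}+1}$, whereas the plain average $\frac{1}{m}\sum_{j}\mathrm{Tr}(\sigma_{j,1}^{2})$ is left unconstrained.

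First I would fix, as a reference, any uniform-weight tight measurement on $\mathbb{C}^{d^{3}}$ --- a SIC-POVM in a dimension where it is known, or a complete set of MUBs when $d$ is a prime power --- for which Eq.~(\ref{average}) gives average single-particle purity exactly $\frac{d+d^{2}}{d^{3}+1}$. Then I would produce a second tight measurement on the \emph{same} space carrying a non-uniform weight function. The natural candidate is the complemented-product construction announced for Section~\ref{S4}: a tensor product of lower-dimensional tight measurements, all of whose elements are fully separable and therefore have single-particle reductions of purity $1$, completed by a POVM built from maximally entangled states, whose single-particle reductions have purity $1/d$, with the two blocks entering the $2$-design condition with (in general) different design weights. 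For such a measurement the unweighted average single-particle purity is the convex combination $\frac{m_{\mathrm{sep}}}{m}\cdot 1+\frac{m-m_{\mathrm{sep}}}{m}\cdot\frac{1}{d}$, whose value is dictated by the cardinalities $m_{\mathrm{sep}}$ and $m-m_{\mathrm{sep}}$ of the two blocks; unless these are balanced precisely so that the combination collapses onto $\frac{d+d^{2}}{d^{3}+1}$, it is a different number, and comparing it with the reference value establishes the proposition. An explicit instance is $N=3$ qubits, $D=8$: the Hoggar-lines SIC has average single-particle purity $2/3$, whereas a complemented product of one-qubit tight measurements with an unbalanced number of separable and maximally entangled operators gives a value distinct from $2/3$.

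The hard part is the second ingredient: one must confirm that a complemented product with a deliberately non-uniform weight still saturates the Welch bound~(\ref{Welch}) --- this is exactly the construction carried out in Section~\ref{S4} --- and then check, by a short computation, that its unweighted average single-particle purity does not match the Haar value. The remaining steps are routine, and the contrast with the bipartite case, where the tight measurements known to exist carry uniform weights and Eq.~(\ref{Haar}) applies verbatim, makes the failure of the generalization to three parties transparent.
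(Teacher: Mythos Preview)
Your argument hinges on a gap between the $\omega$-weighted and the $1/m$-unweighted average of the single-particle purities, but this gap is absent in the paper's setting. Eq.~(\ref{Pi}) fixes every POVM element to the same normalisation $D/m$, which forces the design weights to be uniform, $\omega_j=1/m$; under that restriction Eq.~(\ref{average}) holds verbatim for \emph{every} bipartition $X_i\,|\,\bar X_i$ of any $N$-qudit tight measurement, so the average single-particle purity of a tripartite tight measurement \emph{is} the Lubkin value $(d+d^{2})/(d^{3}+1)$. The nested construction of Section~\ref{S4} is no exception: it is a maximal set of MUBs, hence an \emph{unweighted} $2$-design, and the separable and maximally entangled blocks enter with identical weights $1/m$---your assertion that they carry ``(in general) different design weights'' is simply false for that construction. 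Moreover, were non-uniform weights admitted, the bipartite unweighted average would be just as undetermined, so your proposed bipartite/tripartite contrast (``tight measurements known to exist carry uniform weights'') would evaporate.

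The paper does not spell out a formal proof of Proposition~\ref{prop5}, but the passage immediately preceding it supplies the intended witness: the two locally inequivalent Hoggar fiducials $|\phi^{(1)}_{000}\rangle$ and $|\phi^{(2)}_{000}\rangle$ generate two uniform-weight SIC-POVMs on $(\mathbb C^{2})^{\otimes 3}$ whose elements are all related to the respective fiducial by local unitaries; they therefore share the same single-particle purities ($=2/3$, as Lubkin forces) yet carry different three-tangles, $\tau_{ABC}=2/3$ versus $2/9$. What fails to be determined by the dimensions is thus a genuinely \emph{tripartite} entanglement invariant---a degree-$4$ polynomial in the amplitudes, which a $2$-design cannot constrain---not the degree-$2$ purities your argument targets. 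The sentence following the proposition (``Lubkin's law applies to any bipartition of a multipartite system'') confirms this reading; the phrase ``purity of single particle reductions'' in the statement is loose, and taking it literally leads you to attempt to disprove something that Eq.~(\ref{average}) in fact guarantees.
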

In other words, the law of Lubkin formulated in \cite{L78} for bipartite systems (see Eq. (\ref{average})) cannot be generalized in the above sense to  three-partite systems. However, it is worth to mention that the Lubkin's law applies to any bipartition of a multipartite system.

One can pose the question whether similar isoentangled set  can lead to a set o MUBs
 acting on two-qudit systems, i.e $D = d^2$. 
As we have seen in Eq.(\ref{average}), the average entanglement among all the states of a maximal set of MUB is fixed by the size of the system and the reductions. However, by applying global unitary operations one may obtain different entanglement distributions having the same average entanglement. In particular, isoentangled sets are remarkably interesting, as they can be prepared in the laboratory by considering a single \emph{fiducial} state and local unitary operations. In the simplest case of two qubits 
the answer has recently been shown to be positive
as a configuration of five isoentangled MUBs in dimension 
$D=4$ exists \cite{CGGZ18}.

\section{Nested tight measurements}\label{S4}
In Section \ref{S3}, we have shown that tight measurements cannot be exclusively composed neither of fully separable nor of $k$-uniform states. Here, we study the possibility to construct tight measurements composed by grouping fully separable and $k$-uniform states. Let us start by showing the following result.
\begin{prop}\label{prop6}
Let $\{\Pi_j\}$ be a tight measurement defined for $N$ qudit systems and composed by $m$ rank-one projectors, where $m_{sep}$ of them are fully separable and $m-m_{sep}$ are $k$-uniform states, for a given $k\geq1$. Every reduction to $k$ parties of the entire set of projectors induce a tight measurement for $k$ qudit systems if and only if the bound (\ref{boundm_{sep}}) is saturated.
\end{prop}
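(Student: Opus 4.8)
The plan is to handle one choice of $k$ parties at a time and compare the reduction of the reconstruction formula~(\ref{rho}) with the reconstruction formula that a tight measurement on $\mathbb{C}^{d^k}$ must obey; saturation of~(\ref{boundm_{sep}}) is precisely the condition that makes the two identical. First I would fix a set $X$ of $k$ of the $N$ parties with complement $\overline{X}$, write $D=d^N$, let $S$ be the set of indices of the fully separable operators (so $|S|=m_{sep}$) and $S^{c}$ that of the $k$-uniform ones. Reducing the normalized projectors $\sigma_j=\tfrac{m}{D}\Pi_j=|\varphi_j\rangle\langle\varphi_j|$ to $X$ gives $\sigma_{j,X}=\mathrm{Tr}_{\overline{X}}\,\sigma_j$, which is a rank-one projector $|\psi^{X}_j\rangle\langle\psi^{X}_j|$ for $j\in S$ (a product state restricted to $X$) and equals $\mathbb{I}_{d^k}/d^k$ for $j\in S^{c}$ (the definition of $k$-uniformity). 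Tracing $\sum_j\Pi_j=\mathbb{I}_D$ over $\overline{X}$ yields $\sum_{j\in S}|\psi^{X}_j\rangle\langle\psi^{X}_j|=(m_{sep}/d^k)\,\mathbb{I}_{d^k}$, so that $\Pi^{X}_j:=\tfrac{d^k}{m_{sep}}|\psi^{X}_j\rangle\langle\psi^{X}_j|$, $j\in S$, is a legitimate subnormalized rank-one POVM on $\mathbb{C}^{d^k}$ with $m_{sep}$ outcomes; this is the candidate reduced tight measurement.

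Next I would feed the product state $\rho=\rho_X\otimes\mathbb{I}_{d^{N-k}}/d^{N-k}$, with $\rho_X$ an arbitrary state on $\mathbb{C}^{d^k}$, into~(\ref{rho}) and take $\mathrm{Tr}_{\overline{X}}$ of both sides. Using $p_j=\mathrm{Tr}(\rho\,\Pi_j)=\tfrac{d^k}{m}\,\mathrm{Tr}(\rho_X\,\sigma_{j,X})$ --- which equals $\tfrac{d^k}{m}\langle\psi^{X}_j|\rho_X|\psi^{X}_j\rangle$ for $j\in S$ and $1/m$ for $j\in S^{c}$ --- all the $k$-uniform terms collapse to a single multiple of $\mathbb{I}_{d^k}$, and one obtains, for every $\rho_X$,
\[
\rho_X=\alpha\sum_{j\in S}\langle\psi^{X}_j|\rho_X|\psi^{X}_j\rangle\,|\psi^{X}_j\rangle\langle\psi^{X}_j|+\beta\,\mathbb{I}_{d^k},\qquad \alpha=\frac{(D+1)d^k}{m},\quad \beta=\frac{(D+1)(m-m_{sep})}{m\,d^k}-d^{N-k}.
\]
By contrast,~(\ref{rho}) read on $\mathbb{C}^{d^k}$ says that $\{\Pi^{X}_j\}_{j\in S}$ is tight exactly when it satisfies the same identity with $\alpha$ and $\beta$ replaced by $d^k(d^k+1)/m_{sep}$ and $-1$. (Here I invoke the standard equivalence between validity of such a reconstruction formula for all states and tightness: rewritten, it asserts that the operator $\sum_{j\in S}\bigl(|\psi^{X}_j\rangle\langle\psi^{X}_j|\bigr)^{\otimes 2}$ is proportional to the projector onto the symmetric subspace of $\mathbb{C}^{d^k}\otimes\mathbb{C}^{d^k}$, i.e.\ the second-moment condition defining a complex projective $2$-design --- which in particular makes the family informationally complete. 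One direction is~(\ref{rho}) itself; the other holds because a symmetric operator $O$ is determined by the superoperator $M\mapsto\mathrm{Tr}_{1}[(M\otimes\mathbb{I})\,O]$.) A short algebraic check then shows that the single scalar equation $\alpha=d^k(d^k+1)/m_{sep}$ is equivalent to $m_{sep}(d^N+1)=m(d^k+1)$ --- the saturation of~(\ref{boundm_{sep}}) --- and that this same equality also forces $\beta=-1$.

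Finally I would assemble the two directions. If~(\ref{boundm_{sep}}) is saturated, then $\alpha=d^k(d^k+1)/m_{sep}$ and $\beta=-1$, so the displayed identity is exactly the tight reconstruction formula for $\{\Pi^{X}_j\}_{j\in S}$; holding for every $\rho_X$, it shows the reduction to $X$ is tight. Conversely, if that reduction is tight it obeys its own reconstruction formula, and subtracting it from the displayed identity leaves $\bigl(\alpha-\tfrac{d^k(d^k+1)}{m_{sep}}\bigr)\sum_{j\in S}\langle\psi^{X}_j|\rho_X|\psi^{X}_j\rangle|\psi^{X}_j\rangle\langle\psi^{X}_j|=(-1-\beta)\,\mathbb{I}_{d^k}$ for all states $\rho_X$; since the informationally complete family $\{\psi^{X}_j\}$ makes the map $\rho_X\mapsto\sum_{j\in S}\langle\psi^{X}_j|\rho_X|\psi^{X}_j\rangle|\psi^{X}_j\rangle\langle\psi^{X}_j|$ injective, its left-hand side cannot be a fixed multiple of $\mathbb{I}_{d^k}$ unless that scalar coefficient vanishes, so $\alpha=d^k(d^k+1)/m_{sep}$ and~(\ref{boundm_{sep}}) is saturated. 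As $X$ was an arbitrary $k$-subset, the proposition follows. The main obstacle is the bookkeeping of the middle step --- verifying that after the partial trace the $k$-uniform operators contribute only an explicit multiple of the identity and that the \emph{two} constants in the reduced formula are matched \emph{simultaneously} by the single equality in~(\ref{boundm_{sep}}); a secondary point is the ``reconstruction formula $\Rightarrow$ tightness'' implication used in the first direction, which is the classical equivalence between~(\ref{rho}), the $2$-design second-moment identity and saturation of~(\ref{Welch}), and can alternatively be established by computing $F_2$ of $\{\Pi^{X}_j\}$ directly from the known second moments of the full tight measurement.
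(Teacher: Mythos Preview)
Your proposal is correct and follows essentially the same approach as the paper: feed a product state $\rho_X\otimes\mathbb{I}/d^{N-k}$ into the reconstruction formula~(\ref{rho}), take the partial trace over the complementary parties, and match the resulting coefficients with those of the $k$-qudit reconstruction formula, which forces $m_{sep}(d^N+1)=m(d^k+1)$. Your treatment is in fact a bit more careful than the paper's --- you explicitly verify that the reduced separable projectors form a POVM, you invoke the equivalence between the reconstruction formula and the $2$-design condition, and you give a clean injectivity argument for the converse --- but the core computation and logic are the same.
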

A proof of Proposition \ref{prop6} is given in Appendix \ref{Ap1}. From the saturation of Eq.(\ref{boundm_{sep}}) we realize that if $(d^N+1)$ is not divisible by $(d^k+1)$ then $m_{sep}=d^k+1$, which represents the maximal number of fully separable vectors allowed by a maximal set of MUB for $N$ qudit systems. Note that Proposition \ref{prop6} generalizes the result found in Proposition \ref{prop4}.

 Let us define the main ingredient of this section.
\begin{defi}
A nested tight measurement of degree $k$ is a tight measurement such that every set of reductions to $k$ parties defines a tight measurement.
\end{defi}
In order to illustrate nested tight IC-POVM of degree $k=1$ let us consider a maximal set of 5 MUB for two-qubit systems, given by the columns of the following unitary matrices  \cite{BBRV02}:
\begin{eqnarray*}
&\mathcal{B}_1=\left(\begin{array}{cccc}
1&1&1&1\\
1&\bar{1}&1&\bar{1}\\
1&\bar{1}&\bar{1}&1\\
1&1&\bar{1}&\bar{1}
\end{array}\right)\qquad
\mathcal{B}_2=\left(\begin{array}{cccc}
1&1&1&1\\
i&\bar{i}&i&\bar{i}\\
i&\bar{i}&\bar{i}&i\\
\bar{1}&\bar{1}&1&1
\end{array}\right)\qquad
\mathcal{B}_3=\left(\begin{array}{cccc}
1&1&1&1\\
1&\bar{1}&1&\bar{1}\\
\bar{i}&i&i&\bar{i}\\
i&i&\bar{i}&\bar{i}
\end{array}\right)\qquad
\mathcal{B}_4=\left(\begin{array}{cccc}
1&1&1&1\\
\bar{i}&i&i&\bar{i}\\
1&\bar{1}&1&\bar{1}\\
i&i&\bar{i}&\bar{i}
\end{array}\right),&
\end{eqnarray*}
where the upper bar denotes negative sign and a normalization factor $1/2$ has to be applied to every vector. The remaining unbiased basis is the computational basis $\mathcal{B}_0=\mathbb{I}_4$. Note that $\mathcal{B}_0, \mathcal{B}_1$ and $\mathcal{B}_2$ are separable bases, whereas $\mathcal{B}_3$ and $\mathcal{B}_4$ are maximally entangled bases. Both Alice and Bob reductions of the three separable bases yield a maximal set of 3 MUB for the single qubit system, whereas the remaining reductions are maximally mixed.

Following the main result of Ref. \cite{L11}, we have the following observation.
\begin{obs}\label{obs2}
Nested tight measurements exist for any $N$--qudit system.
\end{obs}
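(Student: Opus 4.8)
The plan is to produce, for every number of parties $N$, an explicit \emph{nested tight measurement} of degree $k=1$ built out of a maximal set of mutually unbiased bases. Fix a prime power $d$ and write $D=d^{N}$. By the main result of Ref.~\cite{L11}, the maximal set of $d^{N}+1$ MUBs in dimension $D$ can be chosen so that exactly $d+1$ of the bases are fully separable across the $N$ parties --- each of them of the form $b^{(1)}\otimes\cdots\otimes b^{(N)}$, where the $\ell$-th factor runs over a complete set of $d+1$ single-qudit MUBs for party $\ell$ --- while every vector of the remaining $d^{N}-d$ bases is a $1$-uniform state. This is exactly the \emph{fully separable plus $k$-uniform} structure required by Propositions~\ref{prop4} and~\ref{prop6} with $k=1$, with $m=d^{N}(d^{N}+1)$ measurement operators of which $m_{sep}=(d+1)\,d^{N}$ are fully separable.

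Next I would verify the saturation of Eq.~(\ref{boundm_{sep}}): one has $m_{sep}\,(d^{N}+1)=(d+1)\,d^{N}(d^{N}+1)=m\,(d+1)$, so the bound of Proposition~\ref{prop4} holds as an equality --- as it must, once all non-separable operators are $1$-uniform. Proposition~\ref{prop6} then applies directly: it guarantees that reducing the whole family of $m$ projectors to any single party produces a tight measurement on $\mathbb{C}^{d}$. Concretely, the $d+1$ separable bases reduce, each reduced vector appearing with multiplicity $d^{N-1}$, to the complete single-qudit MUB set on that party, and every vector of the remaining $d^{N}-d$ bases reduces to $\mathbb{I}_{d}/d$, the combined object being tight by Proposition~\ref{prop6}. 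Since all parties play symmetric roles, every one-party reduction is tight, so this MUB set is nested of degree $1$; as the construction is available for each $N$, the observation follows.

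The step that needs the most care is checking that the $d^{N}-d$ non-separable bases furnished by \cite{L11} can indeed be taken $1$-uniform. The clean route is to work in the Heisenberg--Weyl (generalized Pauli) picture, in which a stabilizer basis is $1$-uniform precisely when its stabilizer subgroup contains no weight-one operator, and then to observe that the $d+1$ fully separable bases already account for all $N(d^{2}-1)$ weight-one generalized Paulis, so the complementary bases carry none and are automatically $1$-uniform. A secondary point concerns the scope of ``any $N$--qudit system'': the construction requires $d$ to be a prime power, so that a complete set of MUBs exists in dimension $d$ (and hence in $d^{N}$); handling an arbitrary local dimension $d$ would need a tight measurement on $\mathbb{C}^{d}$ together with a compatible multipartite lift, which is out of reach precisely in those dimensions that admit neither a complete MUB set nor a SIC-POVM.
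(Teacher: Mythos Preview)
Your proposal is correct and rests on the same construction the paper uses: the maximal MUB family of Ref.~\cite{L11}, whose single-party reductions form a maximal set of $d+1$ MUBs. The paper's justification is considerably terser --- it simply cites \cite{L11} for the fact that the reductions are already MUBs (hence tight) --- whereas you route the argument through the saturation check and Proposition~\ref{prop6}; this is sound but not strictly needed, since \cite{L11} gives the tightness of the reduced measurement directly. Your explicit verification that the non-separable bases are $1$-uniform, and your remark that the construction only covers prime (or prime power) local dimension $d$, are careful points the paper leaves implicit.
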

This observation relies on the fact that there is a maximal set of $d^N+1$ MUB in prime power dimension $D=d^N$ such that every single particle reduction determines a maximal set of $d+1$ MUB in every prime dimension $d$ \cite{L11}. On the other hand, we have the following consequence of Proposition \ref{prop6}.
\begin{obs}\label{obs3}
Nested SIC-POVM do not exist for any $N$--qudit system.
\end{obs}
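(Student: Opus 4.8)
The plan is to read off Observation~\ref{obs3} from Proposition~\ref{prop6} combined with the integrality remark used in the proof of Corollary~\ref{corol3}. Suppose, for contradiction, that an $N$--qudit system with local dimension $d$ carries a nested SIC-POVM of some degree $k$. The case $k=N$ is vacuous (the reduction to $k$ parties is then the whole measurement), and a $k$-party reduction can be maximally mixed only when $k$-uniform states exist, so one may take $1\le k\le\lfloor N/2\rfloor$, hence $k<N$. A SIC-POVM in dimension $D=d^{N}$ has $m=D^{2}=d^{2N}$ equally weighted outcomes.

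By Proposition~\ref{prop6}, a nested tight measurement of degree $k$ -- necessarily built from $m_{sep}$ fully separable operators and $m-m_{sep}$ $k$-uniform ones, this being the only entanglement configuration compatible with the $t=2$ average-purity law~(\ref{average}) and with every $k$-party reduction forming a rank-one tight IC-POVM -- has the nesting property if and only if bound~(\ref{boundm_{sep}}) is attained, i.e.
\begin{equation}\label{eq:sicsat}
m_{sep}\,(d^{N}+1)=m\,(d^{k}+1)=d^{2N}\,(d^{k}+1).
\end{equation}
In particular $m_{sep}=d^{2N}(d^{k}+1)/(d^{N}+1)$ would have to be a positive integer not exceeding $m$.

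This is impossible. Every prime $p$ dividing $d^{2N}$ divides $d$, so $d^{N}+1\equiv 1\pmod{p}$; hence $\gcd(d^{N}+1,d^{2N})=1$, and~(\ref{eq:sicsat}) forces $d^{N}+1$ to divide $d^{k}+1$. Since $d\ge 2$ and $1\le k<N$ we have $0<d^{k}+1<d^{N}+1$, so no such divisibility can hold. The contradiction proves that nested SIC-POVMs do not exist for any number of parties and internal levels, which is Observation~\ref{obs3}. (This is the same obstruction noted for Corollary~\ref{corol3}: the upper bound on $m_{sep}$ imposed by~(\ref{boundm_{sep}}) is never an integer for a SIC-POVM, while Proposition~\ref{prop6} says the nesting property requires that bound to be met with equality.)

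The step I expect to be the real obstacle -- and the only one that is not a one-line computation -- is justifying the parenthetical claim that a nested tight measurement of degree $k$ must have the fully-separable-plus-$k$-uniform structure, rather than this merely being one way to realise nesting. One has to argue that an operator whose reduction to some $k$-element subset has rank strictly between $1$ and $d^{k}$ cannot coexist with every $k$-party reduction of the full set being a rank-one tight IC-POVM, so that, via the average-purity identity~(\ref{average}), the measurement is pinned to the extremal arrangement of Proposition~\ref{prop4}. Once this is granted, Proposition~\ref{prop6} and the divisibility argument above complete the proof without further work.
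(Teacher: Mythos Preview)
Your approach is the same as the paper's. The paper's ``proof'' of Observation~\ref{obs3} is a single sentence: it notes that Eq.~(\ref{boundm_{sep}}) cannot be saturated for $m=d^{2N}$, and treats this (via Proposition~\ref{prop6} and the integrality obstruction already used for Corollary~\ref{corol3}) as equivalent to the non-existence of nested SIC-POVMs. Your divisibility argument --- $\gcd(d^{N}+1,d^{2N})=1$ forcing $(d^{N}+1)\mid(d^{k}+1)$, which is impossible for $1\le k<N$ --- is a clean and correct sharpening of what the paper leaves as the bare assertion ``the upper bound is not an integer.''

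You are also right to flag the structural step. Proposition~\ref{prop6} is stated only for measurements already of the fully-separable-plus-$k$-uniform form, whereas the definition of a nested tight measurement is general. The paper does not address this; its one-line justification silently assumes (consistently with the worked two-qubit MUB example) that the reductions contributing to the lower-dimensional tight measurement are rank one --- forcing those operators to be product across every $k$ vs.\ $N-k$ split --- while the remaining reductions are maximally mixed. Your sketch of why intermediate-rank reductions are excluded is in the right direction, but it is not made rigorous in the paper either, so your proposal is at least as complete as the original argument.
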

In other words, Eq.(\ref{boundm_{sep}}) cannot be saturated for a tight IC-POVM having the minimal possible number of measurement outcomes $m=d^{2N}$, for any number of parties $N$, single particle levels $d$ and reductions to $k$ parties.

To conclude the section, let us mention that Figure \ref{Fig1} summarizes the most important results of the paper. Namely, we show how robustness of informational completeness changes as a function of average entropy of single-particle reductions. Here, robustness of informational completeness is characterized by the maximal tolerance of noise under which a set of noisy measurements is able to efficiently reconstruct any quantum state \cite{S06,ZE11}.
Furthermore, the averaged entropy of a single-particle reduced density matrix
quantifies the entanglement of the analyzed multipartite pure quantum state.

\section{Concluding Remarks}\label{S5}
We studied entanglement configurations allowed by tight informationally complete quantum measurements having any number of outcomes for arbitrary large multipartite quantum systems. These informationally complete measurements, MUB and SIC-POVM in particular, have the advantage to maximize the robustness of  fidelity reconstruction under the presence of errors in both state preparation and measurement stages. We focused our study on the existence of tight measurements having not only the highest possible robustness but also allowing the simplest possible implementation, in the sense of minimizing the physical resources required to implement the measurements in the laboratory. We have shown that tight quantum measurements cannot be exclusively composed neither of fully separable nor of $k$-uniform states, for any $k\geq1$ (see Propositions \ref{prop2} and \ref{prop3}). In particular, the result holds for multipartite GHZ qudit states ($k=1$) and absolutely maximally entangled states ($k=\lfloor N/2\rfloor$).

\begin{figure}[htbp]
\includegraphics[width=12cm]{./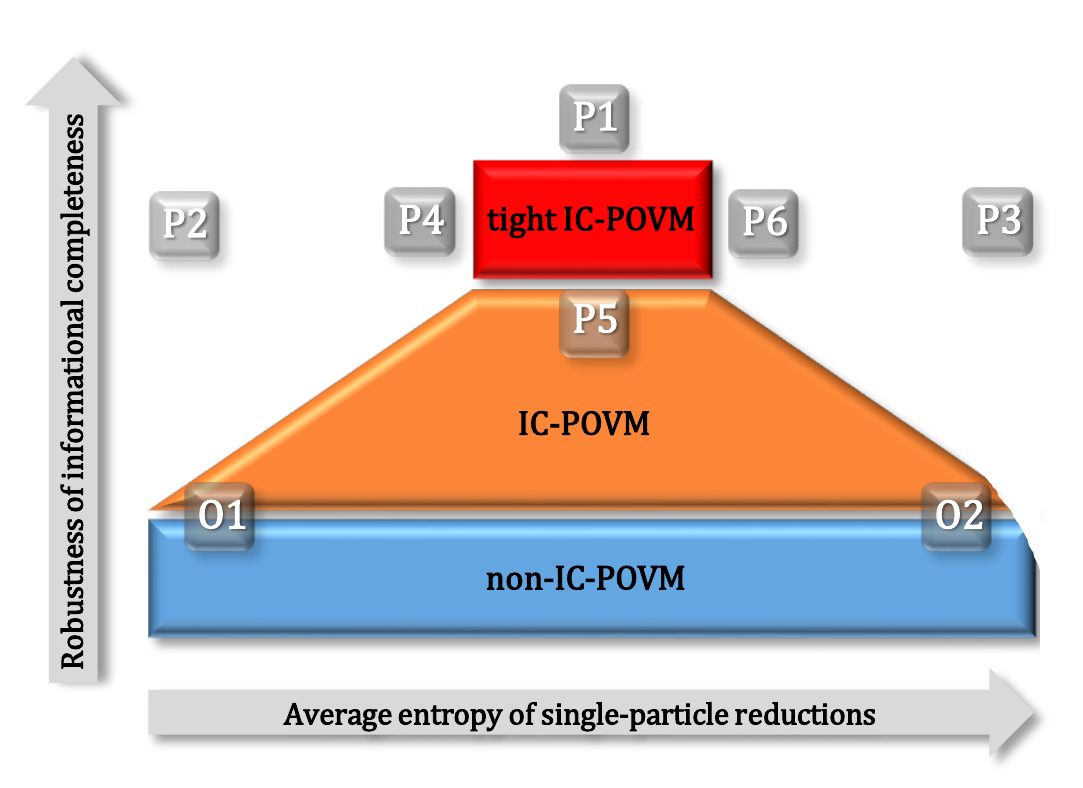}
\caption{Robustness of informational completeness of generalised quantum measurement as a function of average entropy of single-particle reductions. Strength of informational completeness is characterized by its robustness of fidelity reconstruction under noisy state preparation and imperfect measurements. Observations (O) and Propositions (P) derived along the work are illustrated (see details in Conclusions). Note that the lower-right corner of IC-POVM measurements is removed due to the fact that IC-POVM cannot be composed by maximally entangled measurements, whereas some non-IC-POVM can be maximally entangled (e.g. maximally entangled bases). Furthermore, Eq.(\ref{average}) shows that tight measurement establish the strongest possible restriction on the possible entropies of reductions. That is, only a fixed amount of entropy is allowed. The middle region IC-POVM (orange in the online version) represents an interpolation between non-IC-POVM and the strongly constrained tight IC-POVM.}.
\label{Fig1}
\end{figure}

It is also worth to mention that non-tight measurements can be actually composed by fully separable states -- see Observation \ref{obs1} and Fig. \ref{Fig1}. Furthermore, we established an upper bound for the maximal number of fully separable states that can form a part of a tight measurement. 
We showed that a tight measurement composed exclusively out of fully separable states
and $k$-uniform states 
can exist only for $k=1$ -- see Corollary \ref{corol1}. Here, a remarkable example is provided by some special maximal sets of mutually unbiased bases existing in a prime power dimension. 
An analogous statement does not hold for SIC-POVMs for any number of particles and any local dimension, establishing a fundamental difference with respect to mutually unbiased bases --
see Corollary \ref{corol3}.

Furthermore we showed that the average amount of entropy of reductions of a tripartite tight measurement, 
is not solely determined by the number of outcomes, number of parties and the dimensions of the Hilbert spaces, as is the case for the result of Lubkin (\ref{average})
valid for bipartite systems --
see Proposition \ref{prop5}. We introduced the notion of \emph{nested tight measurement}, i.e. tight measurement such that every subset of reductions to a certain number of parties induces a tight measurement (see Section \ref{S4}). These nested measurements exist for $N$ qudit system for any number of parties $N\geq2$ and any number of internal levels $d\geq2$ --
see Observation \ref{obs2}. Finally, we provided necessary and sufficient conditions to have a tight measurement for an $N$--qudit system composed by both fully separable 
and $k$-uniform states -- see Proposition \ref{prop6}.

\section*{Acknowledgements}

We are specially thankful to Marcus Appleby, Ingemar Bengtsson, Markus Grassl, Zbigniew Pucha\l a and Marcin Wie\'{s}niak for fruitful discussions about average moments of reduced density matrices, Hoggar lines and informationally complete measurements in general. Financial support by Narodowe Centrum Nauki under the grant number DEC-2015/18/A/ST2/00274 is gratefully acknowledged.

\appendix

\section{Proof of Propositions}\label{Ap1}
In this Appendix, we provide a proof of Proposition \ref{prop6}.\medskip

\noindent\textbf{PROPOSITION 6.} \emph{Let $\{\Pi_j\}$ be a tight measurement defined for $N$ qudit systems and composed by $m$ rank-one projectors, where $m_{sep}$ of them are fully separable and $m-m_{sep}$ are $k$-uniform states. Every reduction to $k$ parties of the fully separable projectors forms a tight measurement for $k$ qudit systems if and only if the bound (\ref{boundm_{sep}}) is saturated.}
\begin{proof}
Let $\{\Pi_j\}_{j=1}^m=\{\Pi^s_j\}_{j=1}^{m_{sep}}\cup\{\Pi^e_j\}_{j=m_{sep}+1}^m$ be the tight measurement composed by $m_{sep}$ fully separable and $m-m_{sep}$ entangled ($k$-uniform) subnormalized rank-one projectors. The proof basically consists in taking partial trace to Eq.(\ref{rho}) over a given set of $k$ parties denoted $X_i$, for $i\in\{1,\dots,\binom{N}{k}\}$. Calculation of these reductions involve some fine details to be considered below one-by-one. Let us start by applying partial trace to the separable POVM projectors
\begin{eqnarray} 
    \mathrm{Tr}_{X_i}(\Pi_j^s) & =& \frac{d^N}{m}\mathrm{Tr}_{X_i}({|\varphi^s_j\rangle\langle\varphi^s_j|}) 
                                        = \frac{d^N}{m}{|\tilde{\varphi}^{(i)}_j\rangle\langle\tilde{\varphi}^{(i)}_j|}\nonumber \\ 
                                        &=& \frac{d^N}{m} \frac{m_{sep}}{d^k}\left(\frac{d^k}{m_{sep}}{|\tilde{\varphi}^{(i)}_j\rangle\langle\tilde{\varphi}^{(i)}_j|}\right) = \frac{m_{sep} d^N}{m d^k} \tilde{\Pi}^{(i)}_j,
\end{eqnarray}
where $\tilde{\Pi}^{(i)}_j=\frac{d^k}{m_{sep}}{|\tilde{\varphi}^{(i)}_j\rangle\langle\tilde{\varphi}^{(i)}_j|}$ are the suitable subnormalized rank-one POVM projectors according to definition of a tight measurement composed by $m_{sep}$ operators for the subset of $k$ parties $X_i$ (see Section \ref{S2}). For $k$-uniform projectors $\Pi_j^e$ we have uniform reductions to $k$ parties, that is
\begin{eqnarray}
    \mathrm{Tr}_{X_i}(\Pi_j^e)              & = &\frac{d^N}{m}\mathrm{Tr}_{X_i}({|\varphi^e_j\rangle\langle\varphi^e_j|}) =\frac{d^N}{m d^k}\, \mathbb{I}_{d^k}.
\end{eqnarray}
Also, for the identity operator we have
\begin{equation}
    \mathrm{Tr}_{X_i}(\mathbb{I}_{d^N})  =d^{N - k} \mathbb{I}_{d^k}.
\end{equation}
Another important detail to be taken into account is the suitable expression for probabilites. In order to simplify expressions we assume that $\rho=\tilde{\rho}\otimes\mathbb{I}/d^{N-k}$. This assumption can be done without loss of generality, as $\tilde{\rho}$ is an arbitrary density matrix acting on the subset of $k$ parties $X_i$. Therefore, probabilities associated to separable projectors $\Pi_j^s$ can be written as:
\begin{eqnarray}
p_j   & =& \mathrm{Tr}(\rho \Pi_j^s)
                                         = \frac{d^N}{m} \frac{d^k}{d^N}\mathrm{Tr}[(\rho_{X_i}\otimes\mathbb{I}_{d^{N-k}})({\Pi}^{(i)}_j\otimes{{\Pi}_j}_{\bar{X}_i})] \nonumber \\
                                        & =& \frac{d^N}{m} \frac{d^k}{d^N} \frac{m_{sep}}{d^k}\frac{d^k}{m_{sep}} \mathrm{Tr}(\rho_{X_i} {{\Pi}^{(i)}_j})  = \frac{m_{sep}}{m} \mathrm{Tr}(\rho_{X_i} \tilde{\Pi}^{(i)}_j)
                                         = \frac{m_{sep}}{m} \tilde{p}^{(i)}_j,
\end{eqnarray}
where we assumed the decomposition $\Pi_j^s=(d^N/m)\,{{\Pi}_j}_{X_i}\otimes{{\Pi}_j}_{\bar{X}_i}$, for $j\in\{1,\dots,m_{sep}\}$. Here, $\bar{X}_i$ denotes the subset of $N-k$ parties complementary to $X_i$. The remaining ingredient concerns to the simplified expression for a partial sum of probabilities. That is,
\begin{equation}
    \sum_{j=m_{sep} + 1}^m p_j=1 - \sum_{j=1}^{m_{sep}} p^{(i)}_j = 1 - \frac{m_{sep}}{m} \sum_{j=1}^{m_{sep}} \tilde{p}^{(i)}_j = 1 - \frac{m_{sep}}{m}, \label{last}
\end{equation}
where $\tilde{p}^{(i)}_j=\mathrm{Tr}(\tilde{\rho}\,\tilde{\Pi}^{(i)}_j)$ and condition $\sum_{j=1}^{m_{sep}}\tilde{p}^{(i)}_j=1$ is imposed because we require that the set of operators $\{\tilde{\Pi}^{(i)}_j\}$, $j\in\{1,\dots,m_{sep}\}$ forms a tight measurement for every subset of $k$ parties $X_i$, $i\in\{1,\dots,\binom{N}{k}\}$.

Taking into account all the ingredients described above we are now in position to apply partial trace to Eq.(\ref{rho}) with respect to the subset of $k$ parties $X_i$, thus obtaining
\begin{widetext}
\begin{align}\label{partialT}
    \mathrm{Tr}_{X_i}(\rho) & = \frac{m(d^N + 1)}{d^N} \left(\sum_{j=1}^{m_{sep}} p_j^s \mathrm{Tr}_{X_i}(\Pi_j) + \sum_{j=m_{sep} + 1}^{m} p_j \mathrm{Tr}_{X_i}(\Pi_j^k)\right) - \mathrm{Tr}_{X_i}(\mathbb{I}_{d^N}) \nonumber \\
                        & = \frac{m(d^N + 1)}{d^N} \left(\frac{m_{sep} d^N}{m d^k}\sum_{j=1}^{m_{sep}} p_j \tilde{\Pi}_j + \frac{d^N}{m d^k}\sum_{j=m_{sep} + 1}^{m} p_j \mathbb{I}_{d^k}\right) - d^{N - k} \mathbb{I}_{d^k} \nonumber \\
                        & = \frac{m(d^N + 1)}{d^N}\left(\frac{m_{sep} d^N}{m d^k} \frac{m_{sep}}{m}\sum_{j=1}^{m_{sep}} \tilde{p}_j \tilde{\Pi}_j + \frac{d^N}{m d^k}(1 - \frac{m_{sep}}{m}) \mathbb{I}_{d^k}\right) - d^{N - k} \mathbb{I}_{d^k} \nonumber \\
   & = \frac{m_{sep}^2(d^N + 1)}{m d^k} \sum_{j=1}^{m_{sep}} \tilde{p}_j \tilde{\Pi}_j + \frac{m - (d^N + 1) m_{sep}}{m d^k} \mathbb{I}_{d^k}.
\end{align}
\end{widetext}
From comparing Eqs. (\ref{rho}) and (\ref{partialT}) we obtain the following restrictions on the parameters:
\begin{equation}
    \frac{m_{sep}^2 (d^N + 1)}{m d^k} = \frac{m_{sep} (d^k + 1)}{d^k},
\end{equation} 
and
\begin{equation}
    \frac{m - (d^N + 1)m_{sep}}{m d^k} = 1.
\end{equation}
These two equations are equivalent and reduce to $m_{sep}(d^N + 1) = m (d^k + 1)$, which concludes the proof.
\end{proof}

\section{Two-qubit tight measurements composed by four separable vectors}\label{Ap2}

In this section we provide a numerical study of two-qubit tight measurement. Below we present a list of a two-qubit SIC-POVM, i.e. tight measurement having $m=16$ outcomes, such that 5 out of its 16 rank-one projectors are separable.

\begin{eqnarray}
v^{sep}_1=&(&\hspace{-0.2cm}1, 0)\otimes(1, 0)\nonumber\\
v^{sep}_2=&(&\hspace{-0.2cm}0.829450, -0.071288 + 0.554012 i)\otimes(0.538325, -0.738169 + 0.406584 i)\nonumber\\
v^{sep}_3=&(&\hspace{-0.2cm}0.558493, 0.656070 + 0.507598 i)\otimes(0.801919,  0.432837 + 0.411797 i)\nonumber\\
v^{sep}_4=&(&\hspace{-0.2cm}0.822550, -0.502837 +  0.265641 i)\otimes(0.543781, 0.835135 + 0.082763 i)\nonumber\\
v^{sep}_5=&(&\hspace{-0.2cm}0.543031,  0.153431 +  0.825576 i)\otimes(0.823280, -0.032621 - 0.566696 i)\nonumber\\[5pt]
v_6=&(&\hspace{-0.2cm}0.447615, -0.389125-0.446939 i, -0.454783+0.230108 i, -0.276499+0.335022 i)\nonumber\\
v_7=&(&\hspace{-0.2cm}0.446765, 0.604844 -0.274049 i, -0.291296-0.259105 i, 0.439704 -0.118884 i)\nonumber\\
v_8=&(&\hspace{-0.2cm}0.447075, 0.00858814 -0.303724 i, 0.241987 +0.435012 i, -0.638674-0.228265 i)\nonumber\\
v_9=&(&\hspace{-0.2cm}0.446722, 0.334228 +0.585485 i, 0.0484938 +0.315171 i, 0.385974 -0.308672 i)\nonumber\\
v_{10}=&(&\hspace{-0.2cm}0.446515, -0.00980095-0.028985 i, 0.0311724 -0.417279 i, -0.0431798-0.789133 i)\nonumber\\
v_{11}=&(&\hspace{-0.2cm}0.447508, 0.340026 +0.0366212 i, 0.422419 -0.587076 i, -0.394988+0.060542 i)\nonumber\\
v_{12}=&(&\hspace{-0.2cm}0.447241, -0.200413-0.735024 i, 0.340286 -0.312953 i, 0.012735 -0.0751886 i)\nonumber\\
v_{13}=&(&\hspace{-0.2cm}0.448257, -0.213794+0.454123 i, -0.24267-0.231547 i, -0.354814+0.555638 i)\nonumber\\
v_{14}=&(&\hspace{-0.2cm}0.447515, -0.205972-0.235433 i, -0.0725021-0.207019 i, 0.614807 +0.525144 i)\nonumber\\
v_{15}=&(&\hspace{-0.2cm}0.44708, -0.45748+0.428185 i, 0.470043 -0.350614 i, 0.237987 +0.0835461 i)\nonumber\\
v_{16}=&(&\hspace{-0.2cm}0.447179, -0.109299+0.192073 i, -0.833937-0.142906 i, 0.0951736 -0.162052 i)\nonumber
\end{eqnarray}
\begin{table}[h]
    \centering
    \begin{tabular}{c|c|c}
        \hspace{0.4cm}$m$ \hspace{0.3cm} & Welch bound & Weighted frame potencial ($F_2$)   \\ \hline
        16 & 25.600 & 25.600   \\
        17 & 28.900 & 28.914   \\
        18 & 32.400 & 32.414   \\
        19 & 36.100 & 36.101   \\
        20 & 40.000 & 40.000 
    \end{tabular}
    \caption{Numerical optimization of weighted frame potential $F_2$ for generalised measurements of two-qubit systems ($D=2^2$), having $16\leq m\leq20$ measurement outcomes (see Eq.(\ref{potential})). Lower bound for $F_2$, i.e. Welch bound (\ref{Welch}), can be achieved if and only if the quantum measurement is tight. Our study suggests that tight measurement may exist for any number of measurement outcomes $m\geq D^2$.}
    \label{Tabla1}
\end{table}
The smallest weighted frame potential $F_2$ achieved by the above solution is 25.600034, where 25.6 is the value of the Welch bound (\ref{Welch}). As a further numerical study, in Table \ref{Tabla1} we show that tight measurement seem to exist for any number of vectors $m\geq16$, where $m=16$ correspond to the minimal possible number of outcomes (SIC-POVM).

\newpage


\begin{thebibliography}{99}
\bibitem{S08} Scott A 2008 J. Phys. A 41 055308

\bibitem{R05} Renes J 2005 Quantum Inf. Comput. 5 81

\bibitem{SWS07} Scott A, Walgate and J Sanders B 2007 Quantum Inf. Comput. 7 243

\bibitem{S06} Scott A 2006 J. Phys. A 39 13507

\bibitem{RS07} Roy A and Scott A 2007 J. Math. Phys. 48 072110

\bibitem{N81} Neumaier A 1981, \emph{Combinatorial configurations in terms of distances}, Dept. of Mathematics Memorandum 81-09 (Eindhoven University of Technology).

\bibitem{GA16} Graydon M and Appleby D 2016 J. Phys. A: Math. Theor. 49 085301

\bibitem{Z99} Zauner G Ph.D. thesis, University of Vienna, (1999)

\bibitem{DCEL09} Dankert C, Cleve R, Emerson J and Livine E 2009 Phys. Rev. A 80 012304

\bibitem{GAE07} Gross D, Audenaert K and Eisert J 2007 J. Math. Phys. 48 052104

\bibitem{CLM15} Chen B, Li T and Ming Fei S 2015 Quant. Inform. Process. 14 2281

\bibitem{XZ16} Xi Y and Jun Zheng Z 2016 Quant. Inform. Process. 15  5119

\bibitem{AE07} Ambainis A and Emerson J, Quantum t-designs: t-wise independence in the quantum world, Proceedings of the Twenty-Second Annual IEEE Conference on Computational Complexity, June 13-16, Washington, U.S.A. (2007). 

\bibitem{KR05} Klappenecker A and Roetteler M, Mutually Unbiased Bases are Complex Projective 2-Designs, Proceedings of the IEEE International Symposium on Information Theory, Adelaide, Australia, September 2005, p. 1740. Available online: arXiv:quant-ph/0502031

\bibitem{LNGGNDVS11} Lima G, Neves L, Guzm\'{a}n R,  G\'{o}mez E, Nogueira W, Delgado A, Vargas A and Saavedra C 2011 Opt. Express 19 3542

\bibitem{ECGNSXL13} Etcheverry S, Ca\~{n}as G, G\'{o}mez E, Nogueira W, Saavedra C,  Xavier G and Lima G 2013 Sci. Rep. 3 2316

\bibitem{B15} Bent N, Qassim H, Tahir A, Sych D, Leuchs G, L S\'{a}nchez-Soto, Karimi E and Boyd R 2015 Phys. Rev. X, 5 041006

\bibitem{DBBV17} Dall'Arno M, Brandsen S, Buscemi F and Vedral V 2017 Phys. Rev. Lett. 118 250501

\bibitem{HSFHWUZ15} Herbst T, Scheidl T, Fink M, Handsteiner J, Wittmann B, Ursin R and Zeilinger A 2015 PNAS 112(46) 14202

\bibitem{BF03} Benedetto J and Fickus M 2003 Adv. Comput. Math. 18 357

\bibitem{B08} Belovs A MSc thesis, University of Waterloo, Canada (2008) 

\bibitem{W03} Waldron S 2003 IEEE Trans. Inform. Theory 49 2307.

\bibitem{RBSC04} Renes J, Blume-Kohout R, Scott A and Caves C 2004 J. Math. Phys 45 2171


\bibitem{ACFW} Appleby M, Chien T, Flammia S and Waldron S 2017 Constructing exact symmetric informationally complete measurements from numerical solutions Preprint  arXiv:1703.05981

\bibitem{Marcus} Appleby D 2005 J. Math. Phys. 46 052107
  
\bibitem{Markus1} Grassl M 2009 On SIC-POVMs and MUBs in dimension $6$ Preprint quant-ph/0406175.

\bibitem{GS17}  Grassl M and Scott A 2017 J. Math. Phys. 58 12 122201

\bibitem{Scott} Scott A and Grassl A 2010 J. Math. Phys.  51 042203

\bibitem{Andrew} Scott A 2017 SICs: Extending the list of solutions Preprint arXiv:1703.03993

\bibitem{FHS} Fuchs C, Hoan M and Stacey B 2017 Axioms 6 3 21

\bibitem{I81} Ivanovic I 1981 J. Phys. A 14 3241

\bibitem{WF89} Wootters W and Fields B 1989 Ann. Phys. 191 363

\bibitem{S16} Skowronek L 2016   J.Math. Phys. 57 112201  

\bibitem{CGGZ18}  Czartowski J, Grassl M, Goyeneche D and \.{Z}yczkowski K 2018 (unpublished)

\bibitem{L97}  Leonhardt U 1997 Measurement Science and Technology 11 12

\bibitem{ZE11} Zhu H and Englert B 2011 Phys. Rev. A 84 022327

\bibitem{S04} Scott A 2004 Phys. Rev. A 69 052330

\bibitem{AC13} Arnaud L and Cerf N 2013 Phys. Rev. A 87 012319

\bibitem{L78} Lubkin E 1978 J. Math Phys. 19 1028

\bibitem{WPZ11} Wie\'{s}niak M, Paterek T and Zeilinger A 2011 New J. Phys. 13 053047

\bibitem{L11} Lawrence J 2011 Phys. Rev. A 84 022338

\bibitem{HCRLL12} Helwig W, Cui W, Riera A, Latorre J and Lo H 2012 Phys. Rev. A 86, 052335

\bibitem{HS00} Higuchi A amd Sudbery A 2000 Phys. Lett. A 272 213

\bibitem{H98} Hoggar S 1998  Geometriae Dedicata 69, 287--289

\bibitem{Z12} Zhu H Ph.D. thesis, National University of Singapore (2012)

\bibitem{JW15}  Jedwab J  and  Wiebe A, A  simple  construction  of complex equiangular lines, in Algebraic Design Theory and Hadamard Matrices, (C. J. Colbourn, ed. Springer, 2015)  and Preprint arXiv:1408.2492

\bibitem{CKW00} Coffman V, Kundu J and Wootters W 2000 Phys. Rev. A 61 052306

\bibitem{BBRV02} Bandyopadhyay S, Boykin P, Roychowdhury V and Vatan F 2002 Algorithmica 34, 4, 512--528 (2002)


\bibitem{G17} M. Grassl, private communication, 26-10-2017

\end{thebibliography}
\end{document}